\documentclass{article} 
\usepackage{iclr2026_conference,times}

\usepackage{amsmath,amsfonts,bm}

\def\eqref#1{equation~\ref{#1}}

\def\1{\bm{1}}

\DeclareMathAlphabet{\mathsfit}{\encodingdefault}{\sfdefault}{m}{sl}
\SetMathAlphabet{\mathsfit}{bold}{\encodingdefault}{\sfdefault}{bx}{n}

\usepackage{hyperref}
\usepackage{url}
\usepackage[utf8]{inputenc} 
\usepackage[T1]{fontenc}    
\usepackage{hyperref}       
\usepackage{url}    
\usepackage{booktabs}       
\usepackage{amsfonts}       
\usepackage{nicefrac}       
\usepackage{tikz}
\usepackage{xcolor} 
\usepackage{bm}
\usepackage{hyperref}
\usepackage{colortbl}

\usepackage{dsfont}
\usepackage{multirow}
\usepackage{enumitem}
\usepackage{wrapfig}
\usepackage{subcaption}
\usepackage{amsthm}
\usepackage{amsmath}
\usepackage[american]{babel}
\usepackage{setspace}
\usepackage{cleveref}
\usepackage{caption}
\usepackage{threeparttable}
\usepackage{comment}

\newtheorem{lemma}{Lemma}

\usepackage{tcolorbox}
\usepackage{setspace}
\usepackage{adjustbox}
\usepackage{bbding}
\usepackage{float}
\usepackage[ruled]{algorithm2e} 
\usepackage{wrapfig}
\usepackage[normalem]{ulem}
\usepackage{bm}
\usepackage{mathtools}

\definecolor{mine_font}{RGB}{0, 128, 0}
\definecolor{tea_green}{RGB}{214, 234, 193}
\definecolor{hint_green}{RGB}{226,246,209}
\definecolor{Madang}{RGB}{190,235,159}
\definecolor{yellow_green}{RGB}{198,222,119}
\definecolor{link_water}{RGB}{221, 232, 250}
\definecolor{celestial_blue}{RGB}{52, 152, 219}
\definecolor{shakespeare}{RGB}{85, 154, 193}
\definecolor{buttermilk}{RGB}{255,242,174}
\definecolor{chardonnay}{RGB}{250,196,114}
\definecolor{rajah}{RGB}{253,180,98}
\definecolor{fog}{RGB}{213, 193, 234}
\definecolor{melon}{RGB}{254,191,181}
\definecolor{sundown}{RGB}{249, 180, 181}
\definecolor{mona_lisa}{RGB}{246,152,134}
\definecolor{salmon}{RGB}{242,131,107}
\definecolor{blue_x}{RGB}{142, 207, 201}
\definecolor{orange_x}{RGB}{255, 190, 122}

\definecolor{saltpan}{RGB}{238, 243, 232}
\definecolor{aqua_spring}{RGB}{232, 243, 232}
\definecolor{tea_green}{RGB}{214, 234, 193}
\definecolor{Madang}{RGB}{190,235,159}
\definecolor{fringy_flower}{RGB}{194, 234, 193}
\definecolor{aero_blue}{RGB}{193, 234, 213}
\definecolor{pixie_green}{RGB}{183,214,170}
\definecolor{french_pass}{RGB}{195,232,246}
\definecolor{ice_cold}{RGB}{169,232,220}
\definecolor{pale_turquoise}{RGB}{172,240,242}
\definecolor{cruise}{RGB}{179,226,205}
\definecolor{sail}{RGB}{163,205,235}
\definecolor{spindle}{RGB}{179,205,227}
\definecolor{link_water}{RGB}{221, 232, 250}
\definecolor{periwinkle}{RGB}{203,213,232}
\definecolor{zanah}{RGB}{220, 233, 213}
\definecolor{frostee}{RGB}{217, 231, 214}
\definecolor{opal}{RGB}{199, 221, 211}
\definecolor{jet_stream}{RGB}{188, 214, 210}
\definecolor{skeptic}{RGB}{153, 187, 167}
\definecolor{hint_green}{RGB}{226,246,209}
\definecolor{snow_flurry}{RGB}{230,245,201}
\definecolor{surf_crest}{RGB}{205,230,208}
\definecolor{yellow_green}{RGB}{198,222,119}
\definecolor{cream}{RGB}{255,255,204}
\definecolor{pale_prim}{RGB}{255,255,179}
\definecolor{spring_sun}{RGB}{242,243,195}
\definecolor{portafino}{RGB}{245,237,160}
\definecolor{buttermilk}{RGB}{255,242,174}
\definecolor{cream_brulee}{RGB}{255, 229, 151}
\definecolor{dairy_cream}{RGB}{254,226,189}
\definecolor{champagne}{RGB}{254,217,166}
\definecolor{chardonnay}{RGB}{250,196,114}
\definecolor{manhattan}{RGB}{226,180,125}
\definecolor{rajah}{RGB}{253,180,98}
\definecolor{early_dawn}{RGB}{252,243,218}
\definecolor{egg_shell}{RGB}{238, 234, 215}
\definecolor{selago}{RGB}{243, 232, 243}
\definecolor{quartz}{RGB}{219,223,238}
\definecolor{fog}{RGB}{213, 193, 234}
\definecolor{languid_lavender}{RGB}{222,203,228}
\definecolor{watusi}{RGB}{254,221,207}
\definecolor{coral_andy}{RGB}{243,204,205}
\definecolor{cosmos}{RGB}{248,209,210}
\definecolor{melon}{RGB}{254,191,181}
\definecolor{azalea}{RGB}{234, 193, 194}
\definecolor{beauty_bush}{RGB}{235, 185, 179}
\definecolor{sundown}{RGB}{249, 180, 181}
\definecolor{mona_lisa}{RGB}{246,152,134}
\definecolor{salmon}{RGB}{242,131,107}

\definecolor{summer_sky}{RGB}{58, 151, 233}
\definecolor{chateau_green}{RGB}{72, 179, 96}
\definecolor{matisse}{RGB}{25, 104, 167}
\definecolor{allports}{RGB}{31, 106, 125}
\definecolor{sun_shade}{RGB}{255, 144, 68}
\definecolor{flamingo}{RGB}{237, 88, 85}
\definecolor{studio}{RGB}{128, 91, 160}

\definecolor{maya_blue}{RGB}{102, 204, 255}
\definecolor{feijoa}{RGB}{178,223,138}
\definecolor{sushi}{RGB}{117, 168, 47}
\definecolor{norway}{RGB}{158, 194, 132}
\definecolor{japanese_laurel}{RGB}{53, 116, 40}
\definecolor{see_green}{RGB}{161,228,195}
\definecolor{monte_carlo}{RGB}{135,204,194}
\definecolor{granny_smith_apple}{RGB}{150,214,150}
\definecolor{moss_green}{RGB}{170,216,176}
\definecolor{chateau_green}{RGB}{72, 179, 96}
\definecolor{opal}{RGB}{164,207,190}
\definecolor{acapulco}{RGB}{117, 170, 148}
\definecolor{viridian}{RGB}{55, 137, 122}
\definecolor{amazon}{RGB}{56, 123, 84}
\definecolor{asparagus}{RGB}{123, 160, 91}
\definecolor{fruit_salad}{RGB}{91, 160, 94}
\definecolor{puerto_rico}{RGB}{72, 179, 150}
\definecolor{mountain_meadow}{RGB}{0, 163, 136}
\definecolor{matisse}{RGB}{25, 104, 167}
\definecolor{allports}{RGB}{31, 106, 125}
\definecolor{astral}{RGB}{55, 111, 137}
\definecolor{spring_leaves}{RGB}{46, 83, 117}
\definecolor{biscay}{RGB}{44, 62, 80}
\definecolor{midnight}{RGB}{0, 29, 50}
\definecolor{amethyst}{RGB}{153, 102, 204}
\definecolor{studio}{RGB}{128, 91, 160}
\definecolor{tapestry}{RGB}{194, 109, 132}
\definecolor{atomic_tangerine}{RGB}{255, 153, 102}
\definecolor{amber}{RGB}{255, 191, 0}
\definecolor{casablanca}{RGB}{244, 178, 84}
\definecolor{california}{RGB}{233, 140, 58}
\definecolor{tomato}{RGB}{255, 97, 56} 
\definecolor{alizarin}{RGB}{233, 58, 64}

\definecolor{linen}{RGB}{251, 239, 227}
\definecolor{double_pearl_lusta}{RGB}{253, 242, 208}
\definecolor{oasis}{RGB}{253, 242, 208}
\definecolor{milan}{RGB}{255, 254, 169}
\definecolor{texas}{RGB}{245, 232, 123}
\definecolor{maize}{RGB}{249, 212, 156}

\definecolor{turmeric}{RGB}{211, 178, 76}
\definecolor{saffron}{RGB}{249,193,62}
\definecolor{my_sin}{RGB}{255, 176, 59}
\definecolor{tree_poppy}{RGB}{246, 154, 27}
\definecolor{jaffa}{RGB}{240, 131, 58}
\definecolor{crusta}{RGB}{254, 127, 44}
\definecolor{tahiti_gold}{RGB}{223, 102, 36}
\definecolor{outrageous_orange}{RGB}{255, 100, 45}
\definecolor{safety_orange}{RGB}{254, 106, 0}

\definecolor{azalea}{RGB}{251, 196, 196}
\definecolor{oyster_pink}{RGB}{238,206,205} 
\definecolor{coral_candy}{RGB}{242,208,205} 
\definecolor{baby_pink}{RGB}{246, 194, 192}
\definecolor{petite_orchid}{RGB}{223, 157, 155}
\definecolor{apricot}{RGB}{241,140,122}
\definecolor{NY_pink}{RGB}{228,136,113}
\definecolor{carmine_pink}{RGB}{231, 76, 60}
\definecolor{deep_carmine_pink}{RGB}{236, 50, 67}

\definecolor{wewak}{RGB}{244, 143, 150}
\definecolor{light_coral}{RGB}{244, 127, 123}
\definecolor{bittersweet}{RGB}{255,111,105}
\definecolor{carnation}{RGB}{245, 80, 86}
\definecolor{flamingo}{RGB}{237, 88, 85}
\definecolor{sunset_orange}{RGB}{242,89,75}
\definecolor{ku_crimson}{RGB}{243, 0, 25}
\definecolor{amaranth}{RGB}{234,46,73}
\definecolor{valencia}{RGB}{214, 87, 70}
\definecolor{chilean_fire}{RGB}{215, 87, 44}
\definecolor{mexican_red}{RGB}{170, 41, 37}

\definecolor{napa}{RGB}{163, 154, 137}

\definecolor{athens_gray}{RGB}{236, 240, 241}
\definecolor{gallery}{RGB}{240,240,240}
\definecolor{mercury}{RGB}{230,230,230}
\definecolor{platinum}{RGB}{228,228,228}
\definecolor{silver}{RGB}{191,191,191}
\definecolor{aluminum}{RGB}{153,153,153}
\definecolor{ship_gray}{RGB}{77,77,77}
\definecolor{tuatara}{RGB}{67, 67, 67}

\definecolor{malibu}{RGB}{110, 180, 240}
\definecolor{celestial_blue}{RGB}{52, 152, 219}
\definecolor{curious_blue}{RGB}{41, 128, 185}
\definecolor{french_blue}{RGB}{0, 112, 182}
\definecolor{matisse}{RGB}{25, 104, 167}
\definecolor{shakespeare}{RGB}{85, 154, 193}
\definecolor{seagull}{RGB}{128,177,211}
\definecolor{jelly_bean}{RGB}{45, 126, 150}
\definecolor{venice_blue}{RGB}{87, 135, 105}
\definecolor{boston_blue}{RGB}{68, 147, 161}

\definecolor{turquoise}{RGB}{41,217,194}
\definecolor{java}{RGB}{2,190,196}
\definecolor{riptide}{RGB}{141,211,199}
\definecolor{mountain_meadow}{RGB}{0, 163, 136}
\definecolor{free_speech_aquamarine}{RGB}{0, 156, 114}

\definecolor{cosmic_latte}{RGB}{222, 247, 229}
\definecolor{chinook}{RGB}{163, 232, 178}
\definecolor{padua}{RGB}{121, 189, 143}
\definecolor{ocean_green}{RGB}{79, 176, 112}
\definecolor{pastel_green}{RGB}{107, 227, 135}
\definecolor{chateau_green}{RGB}{69, 191, 85}
\definecolor{RoyalBlue}{RGB}{69, 191, 85}
\definecolor{pigment_green}{RGB}{0, 175, 79}
\definecolor{fern}{RGB}{101,197,117}
\definecolor{killarney}{RGB}{56, 113, 66}

\newtheorem{proposition}{Proposition}

\definecolor{humanblue}{RGB}{74, 152, 237} 
\definecolor{llmred}{RGB}{240, 134, 140}  

\newcommand{\posnum}[1]{\cellcolor{humanblue!30}{#1}}
\newcommand{\negnum}[1]{\cellcolor{llmred!30}{#1}}

\newcommand{\avgshadetrain}[2]{%
  \begingroup
  \pgfmathsetmacro{\ratio}{abs(#1)/abs(#2)}%
  \pgfmathsetmacro{\ratio}{min(\ratio,1)}%
  \pgfmathsetmacro{\gamma}{1.5}%
  \pgfmathsetmacro{\intensity}{100*pow(\ratio,\gamma)}%
  \tikz[baseline=(n.base)]{%
    \node[
      inner sep=2.2pt,            
      outer sep=0pt,
      rounded corners=1pt,        
      fill=llmred!\intensity!white,
      text height=1.8ex,          
      text depth=0pt              
    ] (n) {\strut #1};
  }%
  \endgroup
}

\newcommand{\avgshadeinfer}[2]{%
  \begingroup
  \pgfmathsetmacro{\ratio}{abs(#1)/abs(#2)}%
  \pgfmathsetmacro{\ratio}{min(\ratio,1)}%
  \pgfmathsetmacro{\gamma}{2}%
  \pgfmathsetmacro{\intensity}{100*pow(\ratio,\gamma)}%
  \tikz[baseline=(n.base)]{%
    \node[
      inner sep=2.2pt,            
      outer sep=0pt,
      rounded corners=1pt,        
      fill=llmred!\intensity!white,
      text height=1.8ex,          
      text depth=0pt              
    ] (n) {\strut #1};
  }%
  \endgroup
}

\newcommand{\traincolorbarvertical}[1][2]{%
  \begin{tikzpicture}[x=1cm,y=1cm]
    \def\H{4}   
    \def\W{0.35}  
    \pgfmathsetmacro{\G}{#1}

    \foreach \i in {0,...,119} {%
      \pgfmathsetmacro{\fLow}{\i/120}%
      \pgfmathsetmacro{\fHigh}{(\i+1)/120}%
      \pgfmathsetmacro{\fMid}{(\fLow+\fHigh)/2}%
      \pgfmathsetmacro{\intensity}{100*pow(\fMid,\G)}%

      \pgfmathsetmacro{\yLow}{\H*(1-\fHigh)}%
      \pgfmathsetmacro{\yHigh}{\H*(1-\fLow)}%

      \path[draw=none, fill=llmred!\intensity!white]
        (0,\yLow) rectangle (\W,\yHigh);
    }

    \foreach \p in {0,10,20,30,40,50,60,70,80,90,100} {%
      \pgfmathsetmacro{\frac}{\p/100}%
      \pgfmathsetmacro{\y}{\H*(1-\frac)}

      \draw[white, line width=0.35pt] (0,\y) -- (\W,\y);

      \node[right] at (\W+0.06,\y) {\footnotesize \p};
    }

    \node[right] at (\W,\H+0.5) {\small \textbf{\%}};
  \end{tikzpicture}%
}

\newcommand{\debiascolorbarvertical}[1][2]{%
  \begin{tikzpicture}[x=1cm,y=1cm]
    \def\H{2.5}   
    \def\W{0.32}  
    \pgfmathsetmacro{\G}{#1}

    \foreach \i in {0,...,99} {%
      \pgfmathsetmacro{\fLow}{\i/100}%
      \pgfmathsetmacro{\fHigh}{(\i+1)/100}%
      \pgfmathsetmacro{\fMid}{(\fLow+\fHigh)/2}%
      \pgfmathsetmacro{\intensity}{100*pow(\fMid,\G)}%
      \pgfmathsetmacro{\yLow}{\H * (1-\fHigh)}%
      \pgfmathsetmacro{\yHigh}{\H * (1-\fLow)}%
      \path[draw=none, fill=llmred!\intensity!white]
        (0,\yLow) rectangle (\W,\yHigh);
    }

    \foreach \p in {0,20,40,60,80,100} {%
      \pgfmathsetmacro{\frac}{\p/100}%
      \pgfmathsetmacro{\y}{\H * (1-\frac)}%
      \draw[white, line width=0.35pt] (0,\y) -- (\W,\y);
      \node[right] at (\W+0.06,\y) {\footnotesize \p};
    }

    \node[right] at (\W,\H+0.5) {\small \textbf{\%}};
  \end{tikzpicture}%
}

\newcommand{\debiascolorbarverticalappendix}[1][2]{%
  \begin{tikzpicture}[x=1cm,y=1cm]
    \def\H{4}   
    \def\W{0.35}  
    \pgfmathsetmacro{\G}{#1}

    \foreach \i in {0,...,119} {%
      \pgfmathsetmacro{\fLow}{\i/120}%
      \pgfmathsetmacro{\fHigh}{(\i+1)/120}%
      \pgfmathsetmacro{\fMid}{(\fLow+\fHigh)/2}%
      \pgfmathsetmacro{\intensity}{100*pow(\fMid,\G)}%

      \pgfmathsetmacro{\yLow}{\H*(1-\fHigh)}%
      \pgfmathsetmacro{\yHigh}{\H*(1-\fLow)}%

      \path[draw=none, fill=llmred!\intensity!white]
        (0,\yLow) rectangle (\W,\yHigh);
    }

    \foreach \p in {0,10,20,30,40,50,60,70,80,90,100} {%
      \pgfmathsetmacro{\frac}{\p/100}%
      \pgfmathsetmacro{\y}{\H*(1-\frac)}

      \draw[white, line width=0.35pt] (0,\y) -- (\W,\y);

      \node[right] at (\W+0.06,\y) {\footnotesize \p};
    }

    \node[right] at (\W,\H+0.5) {\small \textbf{\%}};
  \end{tikzpicture}%
}

\title{Data, Not Model: Explaining Bias toward LLM Texts in Neural Retrievers}

\author{
Wei Huang$^{1,2}$, Keping Bi$^{1,2}$\thanks{Corresponding author.}, Yinqiong Cai$^{3}$, Wei Chen$^{1,2}$, Jiafeng Guo$^{1,2}$, Xueqi Cheng$^{1,2}$ \\
$^{1}$State Key Laboratory of AI Safety, Institute of Computing Technology, Chinese Academy of Sciences \\
$^{2}$University of Chinese Academy of Sciences, Beijing, China \\
$^{3}$Baidu Inc., Beijing, China \\
\texttt{huangwei21b@ict.ac.cn, bikeping@ict.ac.cn, caiyinqiong@baidu.com,} \\
\texttt{chenwei2022@ict.ac.cn, guojiafeng@ict.ac.cn, cxq@ict.ac.cn}
}

\iclrfinalcopy 
\begin{document}
\maketitle

\begin{abstract}

Recent studies show that neural retrievers often display source bias, favoring passages generated by LLMs over human-written ones, even when both are semantically similar. This bias has been considered an inherent flaw of retrievers, raising concerns about the fairness and reliability of modern information access systems. 
Our work challenges this view by showing that source bias stems from supervision in retrieval datasets rather than the models themselves. We found that non-semantic differences, like fluency and term specificity, exist between positive and negative documents, mirroring differences between LLM and human texts. In the embedding space, the bias direction from negatives to positives aligns with the direction from human-written to LLM-generated texts. We theoretically show that retrievers inevitably absorb the artifact imbalances in the training data during contrastive learning, which leads to their preferences over LLM texts. To mitigate the effect, we propose two approaches: 1) reducing artifact differences in training data and 2) adjusting LLM text vectors by removing their projection on the bias vector. Both methods substantially reduce source bias. 
We hope our study alleviates some concerns regarding LLM-generated texts in information access systems.

\end{abstract}

\section{Introduction}
\label{sec:intro}
The rapid rise of large language models (LLMs) has reshaped the information landscape, creating corpora where human-written and LLM-generated texts coexist. Within this hybrid ecosystem, an emerging phenomenon has been observed: neural retrievers often prefer LLM-generated passages over semantically similar human-written ones, a phenomenon known as source bias~\citep{dai2024bias,dai2024neural}. This bias raises concerns at multiple levels. For users, it risks diminishing search quality by ranking fluent but less relevant or even misleading LLM outputs above more relevant human-authored content. For human creators, it undermines fairness by systematically downranking their work and reducing its visibility. At the ecosystem level, it may amplify LLM-generated text through self-reinforcing feedback loops, further marginalizing human contributions~\citep{chen2024spiral,zhou2024source}.

Given these significant concerns, understanding the root cause of source bias is crucial. Prior work offers different explanations: \citet{dai2024bias} attribute the bias to architectural similarities between retrievers built on pretrained language models (PLMs) and LLMs, while \citet{wang2025perplexity} argue that retrievers prefer low-perplexity texts, a property often exhibited by LLM outputs. However, it remains unclear why such preferences emerge, and no explanation has been widely accepted. Consequently, recent efforts have shifted toward mitigating source bias, for example, through causal debiasing to reduce the impact of perplexity~\citep{wang2025perplexity} or by aligning LLM outputs to be less biased for retrievers~\citep{dai2025mitigating}.

In this paper, we aim to uncover the root cause of source bias in neural retrievers. Specifically, we address three research questions (RQs):

\begin{itemize}[leftmargin=*, nosep]
\item \textbf{RQ1: Is source bias a general property of neural retrievers?}
Beyond the commonly studied retrievers trained on MS~MARCO~\citep{DBLP:conf/nips/NguyenRSGTMD16}, we examine two additional families: (1) general-purpose embedding models trained for diverse tasks such as clustering, classification, semantic similarity, and retrieval, and (2) unsupervised retrievers trained without relevance annotations, such as Contriever~\citep{izacard2021unsupervised} and SimCSE~\citep{gao2021simcse}. We find that these models exhibit only mild source bias, whereas fine-tuning the unsupervised retrievers on MS MARCO induces severe bias. This suggests that source bias is not inherent to neural retrievers but is largely introduced through relevance supervision.

\item \textbf{RQ2: Why does relevance supervision induce source bias?}
Our analysis of 14 retrieval datasets uncovers systematic non-semantic differences between positive and negative documents, including variations in fluency, as measured by perplexity, and lexical specificity. These differences closely mirror the distinctions between LLM-generated and human-authored texts. In the embedding space, we further observe that the bias direction from negatives to positives aligns strongly with the direction from human-written to LLM-generated texts. Theoretical analysis confirms that retrievers trained with contrastive losses inevitably absorb these imbalances.

\item \textbf{RQ3: How can source bias be mitigated?}
We propose two mitigation strategies: (1) reducing artifact differences in training data to prevent retrievers from encoding non-semantic factors, and (2) debiasing embeddings by subtracting the projection of LLM-generated vectors on the bias direction. Both approaches substantially reduce source bias, confirming that it originates from systematic imbalances in relevance annotations.
\end{itemize}

In summary, we challenge the prevailing view that neural retrievers are inherently biased toward LLM-generated texts. Instead, we show that source bias arises from artifact imbalances in retrieval datasets rather than model architecture. Our findings highlight two complementary pathways for mitigation: curating training data to minimize non-semantic artifacts and explicitly decoupling artifact effects in retrievers. With a deeper understanding of source bias, LLM-generated texts need not be regarded as inherently problematic. We hope this study alleviates concerns about their use and fosters a more objective perspective on integrating LLM-generated data into retrieval systems.
 
\section{Related Work}
\label{sec:related_work}

\paragraph{Source Bias in Information Retrieval.}
 \citet{dai2024neural} revealed that neural retrievers exhibit a clear preference for LLM-generated passages even when their semantic content is similar to human-written ones, a phenomenon termed \emph{source bias}. Cocktail~\citep{dai2024cocktail} further established a benchmark to evaluate this phenomenon across diverse retrieval datasets systematically. Similar effects have also been noted in related IR scenarios, including multimodal retrieval~\citep{xu2024ai}, recommender systems~\citep{zhou2024source}, and retrieval-augmented generation~\citep{chen2024spiral}, underscoring the view that source bias is a broad challenge in the LLM era.  

\paragraph{Mechanisms and Mitigation.}
Prior work has examined both explanations and mitigations for source bias. Early studies linked it to architectural similarity between PLMs and LLMs~\citep{dai2024neural}. \citet{wang2025perplexity} showed that PLM-based retrievers overrate low-perplexity documents, and \citet{dai2024bias} framed the issue more broadly as a distribution mismatch. Mitigation approaches include retriever-side methods such as causal debiasing~\citep{wang2025perplexity} and LLM-side methods like LLM-SBM~\citep{dai2025mitigating}. Following these perspectives, prior work has often assumed that source bias is a universal property of neural retrievers. By contrast, we evaluate a broader spectrum of retrievers and show that source bias is not inherent to neural retrievers. We further develop a retriever-centric theory and conduct a set of experiments indicating that the bias largely arises from supervision, and we provide practical mitigations.

\section{RQ1: Is Source Bias a General Property of Neural Retrievers}
\label{sec:rq1}

The previously discussed phenomenon of source bias~\citep{dai2024bias, dai2024neural} has been mainly observed in retrieval-supervised models, which are trained on relevance-labeled datasets such as MS~MARCO~\citep{DBLP:conf/nips/NguyenRSGTMD16}. This observation prompts us to examine whether source bias is a general property of neural retrievers or a phenomenon largely induced by relevance supervision.

We therefore design two controlled experiments to disentangle the role of supervision from model architecture: (1) we examine whether source bias persists in models beyond those primarily finetuned on retrieval datasets, considering both general-purpose embedding models and unsupervised retrievers; and (2) we assess the impact of retrieval supervision by fine-tuning several unsupervised retrievers on MS~MARCO while holding architecture fixed. Next, we present the model families, datasets, and metrics used in these experiments. 

\subsection{Experimental Setup}
\label{subsec:rq1_design}

\paragraph{Model Families.}
We evaluate three distinct families of models:
(A) \emph{Relevance-Supervised Retrievers}, trained with direct or distilled supervision signals derived from large-scale human relevance annotations (e.g., MS~MARCO), including ANCE\citep{xiong2020approximate}, TAS-B~\citep{hofstatter2021efficiently}, coCondenser~\citep{gao2021unsupervised}, RetroMAE~\citep{xiao2022retromae}, and DRAGON~\citep{lin2023train};
(B) \emph{General-Purpose Embedding Models}, trained on large and diverse corpora with multi-task objectives beyond retrieval (e.g., semantic textual similarity, clustering, and classification) and widely adopted in Retrieval-Augmented Generation (RAG) applications, including BGE~\citep{bge_embedding}, BCE~\citep{youdao_bcembedding_2023}, GTE~\citep{li2023towards}, E5~\citep{wang2022text}, and M3E~\citep{Moka_Massive_Mixed_Embedding};
(C) \emph{Unsupervised Retrievers}, trained without any human relevance annotations, typically via self-supervised contrastive objectives, including Contriever~\citep{izacard2021unsupervised}, unsupervised SimCSE~\citep{gao2021simcse}, and the unsupervised variant of E5~\citep{wang2022text}.


\paragraph{Datasets.}
Following recent work on source bias~\citep{wang2025perplexity, dai2025mitigating}, We conduct experiments on the Cocktail benchmark~\citep{dai2024cocktail}, which pairs human-written passages with LLM-generated counterparts that are semantically similar. In particular, we use the 14 datasets in Cocktail that originate from BEIR~\citep{thakur2021beir}, covering diverse domains such as open-domain QA, scientific retrieval, fact verification, and argumentative search. All datasets and model checkpoints are from publicly available HuggingFace releases to ensure reproducibility, with links and dataset statistics reported in Appendix~\ref{app:resources} and Appendix~\ref{app:dataset_stats}.


\paragraph{Preference Metrics.}
Prior work has shown that relevance-based metrics can conflate retrieval quality with source preference. To isolate preference from relevance, \citet{huang2025llm} proposed the Normalized Discounted Source Ratio (NDSR), which measures the proportion of retrieved documents from a given source type within the top-$k$ results:
$$
\mathrm{NDSR}_{c}@k = 
\frac{\sum_{i=1}^k \mathds{1}(\text{source}(d_i)=c) \cdot w_i}
{\sum_{i=1}^k w_i},
\qquad
\Delta\mathrm{NDSR}@k =
\mathrm{NDSR}_{\text{Human}}@k -
\mathrm{NDSR}_{\text{LLM}}@k.
$$
Here, $c \in \{\text{Human}, \text{LLM}\}$ specifies the source category being measured; 
$\mathds{1}(\cdot)$ is an indicator that returns $1$ when the document $d$ at rank $i$ originates from source $c$ and $0$ otherwise; 
$w_i = 1/\log_2(1+i)$ is a rank discount that assigns higher weight to higher-ranked positions; 
and $k$ denotes the evaluation depth, i.e., the top-$k$ retrieved documents. 
We use $\Delta\mathrm{NDSR}@k$ as our main preference metric, which ranges from $-1$ to $1$: positive values indicate a preference for human-written passages, while negative values indicate a preference for LLM-generated passages.

\begin{table*}[t]
\centering
\caption{$\Delta$NDSR@5 results across 14 datasets for 13 neural retrievers spanning three model families. 
Negative values are shown with \textcolor{llmred}{red shading} and indicate a preference for LLM-generated passages, 
while positive values are shown with \textcolor{humanblue}{blue shading} and indicate a preference for human-written passages.}
\label{tab:rq1_main}
\renewcommand{\arraystretch}{0.95}
\setlength{\tabcolsep}{3.5pt}
\resizebox{1.0\textwidth}{!}{
\begin{tabular}{lccccc | ccccc | ccc}
\toprule
\multirow{2}{*}{Dataset (↓)} & \multicolumn{5}{c}{Relevance-Supervised Retrievers} & \multicolumn{5}{c}{General-Purpose Embedding Models} & \multicolumn{3}{c}{Unsupervised Retrievers} \\
\cmidrule(lr){2-6} \cmidrule(lr){7-11} \cmidrule(lr){12-14}
 & ANCE & TAS-B & coCondenser & RetroMAE & DRAGON & BGE & BCE & GTE & E5 & M3E & Contriever & E5-Unsup & SimCSE \\
\midrule
MS~MARCO       & \negnum{-0.040} & \negnum{-0.119} & \negnum{-0.018} & \negnum{-0.080} & \negnum{-0.081} & \negnum{-0.021} & \posnum{0.084} & \negnum{-0.074} & \negnum{-0.036} & \posnum{0.053} & \posnum{0.280} & \posnum{0.094} & \posnum{0.384} \\
DL19           & \negnum{-0.073} & \negnum{-0.224} & \negnum{-0.072} & \negnum{-0.180} & \negnum{-0.233} & \negnum{-0.017} & \posnum{0.119} & \negnum{-0.178} & \posnum{0.015}  & \posnum{0.139} & \posnum{0.271} & \posnum{0.086} & \posnum{0.428} \\
DL20           & \negnum{-0.029} & \negnum{-0.070} & \negnum{-0.078} & \negnum{-0.081} & \negnum{-0.116} & \posnum{0.057} & \posnum{0.048} & \negnum{-0.049} & \posnum{0.012}  & \posnum{0.203} & \posnum{0.275} & \posnum{0.190} & \posnum{0.389} \\
NQ             & \negnum{-0.040} & \negnum{-0.074} & \negnum{-0.067} & \negnum{-0.055} & \negnum{-0.096} & \negnum{-0.078} & \posnum{0.324} & \negnum{-0.003} & \posnum{0.153}  & \posnum{0.040} & \posnum{0.186} & \posnum{0.228} & \posnum{0.140} \\
NFCorpus       & \negnum{-0.087} & \negnum{-0.082} & \negnum{-0.067} & \negnum{-0.098} & \negnum{-0.079} & \posnum{0.030} & \negnum{-0.064} & \negnum{-0.142} & \posnum{0.034}  & \negnum{-0.143} & \negnum{-0.083} & \negnum{-0.348} & \posnum{0.127} \\
TREC-COVID     & \negnum{-0.162} & \negnum{-0.328} & \negnum{-0.340} & \negnum{-0.193} & \negnum{-0.133} & \posnum{0.014} & \negnum{-0.025} & \negnum{-0.236} & \negnum{-0.118} & \negnum{-0.085} & \negnum{-0.135} & \negnum{-0.224} & \posnum{0.162} \\
HotpotQA       & \negnum{-0.015} & \negnum{-0.011} & \negnum{-0.008} & \negnum{-0.013} & \posnum{0.014} & \posnum{0.061} & \posnum{0.184} & \posnum{0.010} & \posnum{0.078}  & \posnum{0.063} & \negnum{-0.273} & \negnum{-0.091} & \posnum{0.097} \\
FiQA-2018      & \negnum{-0.179} & \negnum{-0.169} & \negnum{-0.257} & \negnum{-0.244} & \negnum{-0.160} & \negnum{-0.150} & \posnum{0.414} & \negnum{-0.050} & \negnum{-0.116} & \posnum{0.102} & \negnum{-0.068} & \negnum{-0.052} & \posnum{0.210} \\
Touché-2020    & \negnum{-0.101} & \negnum{-0.165} & \negnum{-0.128} & \negnum{-0.099} & \negnum{-0.052} & \negnum{-0.042} & \posnum{0.218} & \negnum{-0.017} & \negnum{-0.185} & \posnum{0.242} & \negnum{-0.133} & \negnum{-0.062} & \posnum{0.064} \\
DBpedia        & \negnum{-0.095} & \negnum{-0.039} & \negnum{-0.053} & \negnum{-0.077} & \negnum{-0.054} & \posnum{0.017} & \posnum{0.069} & \negnum{-0.035} & \posnum{0.003}  & \posnum{0.019} & \negnum{-0.130} & \negnum{-0.062} & \posnum{0.064} \\
SCIDOCS        & \negnum{-0.040} & \negnum{-0.054} & \negnum{-0.058} & \negnum{-0.073} & \negnum{-0.048} & \negnum{-0.061} & \posnum{0.517} & \negnum{-0.046} & \posnum{0.010}  & \posnum{0.275} & \posnum{0.028} & \posnum{0.059} & \posnum{0.268} \\
FEVER          & \negnum{-0.199} & \negnum{-0.024} & \negnum{-0.032} & \negnum{-0.006} & \negnum{-0.040} & \posnum{0.040} & \posnum{0.306} & \negnum{-0.027} & \posnum{0.031}  & \posnum{0.031} & \posnum{0.028} & \negnum{-0.008} & \posnum{0.031} \\
Climate-FEVER  & \negnum{-0.314} & \negnum{-0.082} & \negnum{-0.153} & \negnum{-0.105} & \negnum{-0.091} & \negnum{-0.038} & \posnum{0.642} & \negnum{-0.080} & \posnum{0.215}  & \posnum{0.123} & \negnum{-0.003} & \posnum{0.017} & \posnum{0.070} \\
SciFact        & \negnum{-0.024} & \negnum{-0.058} & \negnum{-0.049} & \negnum{-0.048} & \negnum{-0.041} & \posnum{0.011} & \posnum{0.015} & \negnum{-0.079} & \posnum{0.004}  & \negnum{-0.206} & \posnum{0.017} & \negnum{-0.101} & \negnum{-0.059} \\
\bottomrule
\end{tabular}
}
\end{table*} 

\subsection{Experimental Results}
\label{subsec:rq1_results}

Having established the model families, datasets, and evaluation metrics, we now turn to the results of our two controlled experiments. These experiments separate the influence of retrieval supervision from differences across retriever families.

\paragraph{Source Bias across Retriever Families.}
We first examine whether source bias extends beyond Relevance-Supervised Retrievers to other model families. Table~\ref{tab:rq1_main} presents $\Delta$NDSR@5 results on 14 datasets for all three families. The results show that \emph{Relevance-Supervised Retrievers} consistently favor LLM-generated passages, with negative scores on nearly all datasets, aligning with prior observations of source bias in this category. In contrast, \emph{General-Purpose Embedding Models} and \emph{Unsupervised Retrievers} show no consistent pattern, with preferences varying across datasets in both directions. This suggests that source bias is not consistently present across all retriever families. In addition to these source-preference results, we also report the retrieval effectiveness of all models in Appendix~\ref{app:rq1_effectiveness} for completeness.

\begin{wraptable}{r}{0.4\textwidth}
\vspace{-\baselineskip}
\caption{$\Delta$NDSR@5 results of unsupervised retrievers after MS~MARCO fine-tuning, corresponding to the same base models in Table~\ref{tab:rq1_main}. The ``-FT'' suffix denotes fine-tuning on MS~MARCO. Negative values are shown with \textcolor{llmred}{red shading} and indicate a preference for LLM-generated passages, while positive values are shown with \textcolor{humanblue}{blue shading} and indicate a preference for human-written passages.}
\label{tab:rq1_ctrl}
\renewcommand{\arraystretch}{0.95}
\setlength{\tabcolsep}{4pt}
\resizebox{\linewidth}{!}{
\begin{tabular}{l ccc}
\toprule
\multirow{2}{*}{Dataset (↓)} & \multicolumn{3}{c}{Relevance-Supervised Retrievers} \\
\cmidrule(lr){2-4}
& Contriever-FT & E5-FT & SimCSE-FT \\
\midrule
MS~MARCO      &  \posnum{0.012} & \negnum{-0.044} & \negnum{-0.053} \\
DL19          &  \negnum{-0.035} & \negnum{-0.198} & \negnum{-0.133} \\
DL20          &  \posnum{0.121} &  \posnum{0.022} & \negnum{-0.178} \\
NQ            &  \negnum{-0.038} & \negnum{-0.051} & \negnum{-0.060} \\
NFCorpus      &  \negnum{-0.139} & \negnum{-0.189} & \negnum{-0.060} \\
TREC-COVID    &  \negnum{-0.282} & \negnum{-0.271} & \negnum{-0.205} \\
HotpotQA      &  \negnum{-0.004} & \negnum{-0.019} & \negnum{-0.013} \\
FiQA-2018     &  \negnum{-0.215} & \negnum{-0.212} & \negnum{-0.189} \\
Touché-2020   &  \negnum{-0.087} & \negnum{-0.196} & \negnum{-0.169} \\
DBpedia       &  \negnum{-0.010} & \negnum{-0.036} & \negnum{-0.053} \\
SCIDOCS       &  \negnum{-0.050} & \negnum{-0.072} & \negnum{-0.041} \\
FEVER         &  \negnum{-0.018} & \negnum{-0.064} & \posnum{0.000} \\
Climate-FEVER &  \negnum{-0.099} & \negnum{-0.091} & \negnum{-0.049} \\
SciFact       &  \negnum{-0.086} & \negnum{-0.077} & \negnum{-0.044} \\
\bottomrule
\end{tabular}}
\vspace{-4\baselineskip}
\end{wraptable}

\paragraph{Impact of Supervision on Source Bias.}
We then turn to the second experiment, where we fine-tune unsupervised retrievers on MS~MARCO. In their base form (Table~\ref{tab:rq1_main}), Contriever, E5-Unsup, and SimCSE display only mild or inconsistent source preferences. After fine-tuning, however, all three models exhibit a clear shift toward favoring LLM-generated passages, as shown in Table~\ref{tab:rq1_ctrl}. This contrast indicates that retrieval supervision is a key factor driving the observed source bias.

\paragraph{Summary.}
Taken together, these findings indicate that source bias is not an inherent property of neural retrievers but is largely induced by retrieval dataset supervision, motivating the next section on why relevance supervision gives rise to such bias.

\section{RQ2: Why Does Relevance Supervision Induce Source Bias?}
\label{sec:rq2}

Since source bias is largely induced by relevance supervision, we now examine why such supervision leads retrievers to prefer LLM-generated text. We hypothesize that supervised datasets introduce systematic imbalances in non-semantic artifacts between positive and negative passages, such as fluency and lexical specificity. These imbalances lead retrievers to learn to exploit these stylistic cues alongside semantic content. Positive passages in retrieval datasets are often polished and information-dense to resemble high-quality answers, a stylistic pattern that coincides with LLM-generated text. This overlap explains why retrievers tend to favor LLM-generated passages during inference. We examine this mechanism through linguistic analyses, embedding-space evidence, and a theoretical decomposition of the retrieval objective.

\begin{figure}[t]
  \centering
  \begin{subfigure}[t]{0.48\linewidth}
    \centering
    \includegraphics[width=\linewidth]{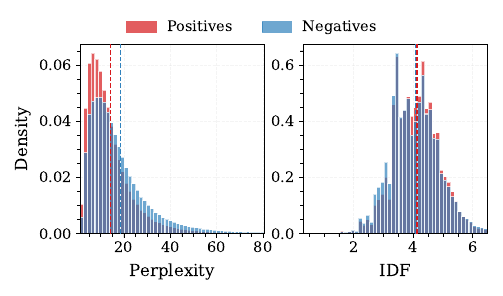}
    \caption{Positives vs. Negatives.}
    \label{fig:ppl_idf_supervision}
  \end{subfigure}
  \hfill
  \begin{subfigure}[t]{0.48\linewidth}
    \centering
    \includegraphics[width=\linewidth]{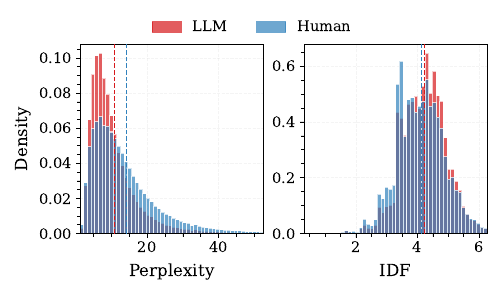}
    \caption{LLM-generated vs. Human-written.}
    \label{fig:ppl_idf_source}
  \end{subfigure}

  \caption{
  Distribution of perplexity and inverse document frequency.
  (a) Comparison between annotated positives and the negatives in training supervision. 
  (b) Comparison between LLM-generated and human-written passages. 
  In both settings, the first group (Positives / LLM) exhibits lower PPL and higher IDF, revealing parallel artifact imbalances. Dashed lines indicate means.
  }
  \label{fig:ppl_idf}
\end{figure}

\subsection{Linguistic Analyses}
\label{subsec:rq2_linguistic}

To examine whether positive passages and LLM-generated passages share similar stylistic patterns, we conduct linguistic analyses. We focus on two complementary features: perplexity (PPL), which captures fluency, and inverse document frequency (IDF), which captures lexical specificity.

\emph{Perplexity (PPL).} 
Given a passage $d = (w_1,\ldots,w_{|d|})$ with $|d|$ tokens, 
its perplexity under a language model $p_{\theta}$ is defined as
$
\mathrm{PPL}(d) = \exp\left(-\frac{1}{|d|}\sum_{i=1}^{|d|}\log p_{\theta}(w_i \mid w_{<i})\right).
$
Lower PPL corresponds to more predictable and fluent text under the model. We compute PPL using Llama-3-8B-Instruct\citep{dubey2024llama}, a strong open-weight model whose broad training distribution provides a reliable proxy for human-perceived fluency.

\emph{Inverse Document Frequency (IDF).} 
For a token $t$, its IDF is defined as
$
\mathrm{IDF}(t) = \log \frac{N}{1 + \mathrm{df}(t)},
$
where $N$ is the total number of documents in the corpus and $\mathrm{df}(t)$ is the number of documents containing $t$. 
Passage-level IDF is computed as the median of token-level IDF values within the passage, which provides robustness to outliers. We estimate IDF statistics on the full MS~MARCO collection ($\sim$8.8M passages), using the standard tokenizer from the Apache Lucene library~\cite{hatcher2004lucene} for passage segmentation.

\vspace{-0.75em}
\paragraph{Training Data: Positives vs. Negatives.} 
We begin by examining the artifact imbalance between positives and negatives in training data, using MS~MARCO as a representative case.
Specifically, we define the positive pool as the union of passages annotated as relevant to at least one training query, and the negative pool as the remaining passages. Although the negative pool may contain unannotated false negatives, it is mostly irrelevant in practice.

Figure~\ref{fig:ppl_idf_supervision} shows that positives have lower perplexity (PPL) and a slight increase in inverse document frequency (IDF) compared to the negatives. Both differences are statistically significant; the difference in PPL is larger, while the effect of IDF is statistically reliable but small(see Appendix~\ref{app:linguistic_appendix} for detailed statistics). Overall, positives are more fluent and marginally higher lexical specificity. This pattern is linguistically natural: annotated positives are often drawn from the main content of edited sources (e.g., news articles, Wikipedia entries, product pages), whereas the negatives covers a wider range of raw web text (e.g., forums, boilerplate, semi-structured fragments) that typically introduce disfluencies and lexically less specific patterns.

Taken together, these findings show that relevance-labeled datasets exhibit artifact imbalance, as exemplified by MS~MARCO. Beyond MS~MARCO, we also observe consistent PPL imbalances across other IR datasets (Appendix~\ref{app:linguistic_appendix}), suggesting that this tendency is a general property of retrieval supervision rather than an idiosyncrasy of a single dataset. This raises the question of whether similar imbalances also arise when contrasting passages by source.

\paragraph{Source Type: LLM-generated vs. Human-written Passages.} 
To investigate this question, we compare LLM-generated passages with their human-written counterparts on the 14 BEIR-derived datasets from the Cocktail benchmark. For clarity of presentation, Figure~\ref{fig:ppl_idf_source} reports representative results on MS~MARCO, where LLM-generated passages exhibit lower PPL and higher IDF than human passages, with statistically significant differences of moderate effect size(see Appendix~\ref{app:linguistic_appendix} for detailed statistics). This pattern aligns with how LLMs are trained: pretraining on large, relatively curated corpora encourages more formal and information-dense language, yielding outputs that are more polished and lexically informative. Complete results across all 14 datasets are provided in Appendix~\ref{app:linguistic_appendix}, with consistent patterns observed across all datasets.

\paragraph{Summary.}
Taken together, the analyses show that the artifact imbalances between positives and negatives are consistent with those between LLM-generated and human-written passages. This consistency suggests that source bias may arise from the same underlying stylistic imbalances shared between supervised datasets and LLM-generated text. 

While perplexity and IDF serve as illustrative examples, they do not capture the full spectrum of stylistic artifacts. To move beyond linguistic features and connect more directly to the mechanisms of neural retrieval, we next examine how such imbalances are encoded in the embedding space.

\subsection{Embedding-space Shifts}
\label{subsec:rq2_embedding}

In this section, we investigate whether the embedding shift induced by supervision (positives vs. negatives) aligns with the shift induced by source type (LLM-generated vs. human-written passages). To address this, we proceed in three steps: (1) estimate the direction separating positives from negatives; (2) estimate the direction separating LLM-generated from human-written passages and assess its stability; and (3) evaluate whether the two directions are aligned.

\paragraph{Notation.} 
Let $q$ denote a query and $d$ denote a passage. For supervised retrieval, we write $d^{+}$ and $d^{-}$ for an annotated positive and a sampled negative passage; for source-type analysis, we write $d^{\text{LLM}}$ and $d^{\text{Human}}$ for an LLM-generated passage and its human-written counterpart. The query and document encoders $h_q(\cdot)$ and $h_d(\cdot)$ map $q$ and $d$ to embeddings in $\mathbb{R}^m$, where $m$ is the embedding dimension, and the retrieval score is given by $s_\theta(q,d)=\langle h_q(q),\, h_d(d)\rangle$. 

We use $\delta$ to denote a displacement vector between paired embeddings, such as the LLM--Human displacement $\delta^{\text{LH}} = h_d(d^{\text{LLM}})-h_d(d^{\text{Human}})$. The symbol $\bar{\delta}$ denotes the average displacement over a set of paired passages (e.g., across a dataset). $\mathbb{E}[\cdot]$ denotes expectation over the indicated distribution.

\paragraph{Estimating the Positive–Negative Embedding Direction.} To estimate an embedding direction that primarily reflects stylistic artifacts rather than semantic variation, it is important to ensure that the positive and negative pools have comparable semantic distributions. In MS~MARCO, however, positives and negatives differ systematically in topical coverage. Following common practice in \citep{karpukhin2020dense}, we mitigate this by retrieving the top-10 BM25 candidates for each query and randomly sampling one as the negative, yielding a 1:1 pairing with the annotated positive. This construction balances topical distributions, allowing the mean embedding contrast between positives and negatives to more accurately isolate non-semantic artifacts. Formally, we estimate the supervision-induced positive–negative embedding direction as
$
\overline{\delta}_{\text{PN}} = \mathbb{E}\big[h_d(d^+) - h_d(d^-)\big].
$

\paragraph{Significance Criterion in High-Dimensional Space.} 
Before turning to the LLM–Human direction, we first establish a statistical threshold to test whether displacement vectors exhibit a coherent direction rather than random noise. In 768 dimensions, random vectors are almost orthogonal, with cosine similarities 
concentrated around zero. Over $99.7\%$ of random pairs fall within $\pm 3\sigma$ of the mean (Appendix~\ref{app:cosine_null}). Deviations beyond this range therefore indicate a consistent, 
non-random effect. We use this as the significance criterion for subsequent analyses.


\begin{figure}[t]
  \centering
  \begin{tabular}{
    @{}
    b{0.45\linewidth}
    @{\hspace{0.03\linewidth}}
    b{0.52\linewidth} 
    @{} 
  }
    \setcounter{subfigure}{0} 
    \begin{subfigure}[t]{\linewidth}
      \centering
      \includegraphics[width=0.97\linewidth]{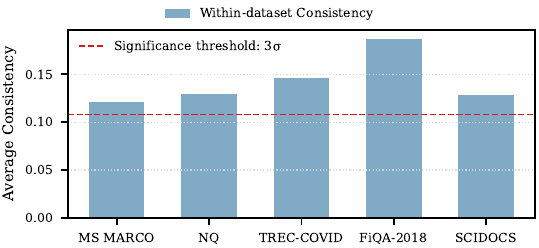}
      \caption{Within-dataset consistency.}
      \label{fig:rq2-a}
    \end{subfigure}

    \setcounter{subfigure}{2} 
    \begin{subfigure}[t]{\linewidth}
      \centering
      \includegraphics[width=0.97\linewidth]{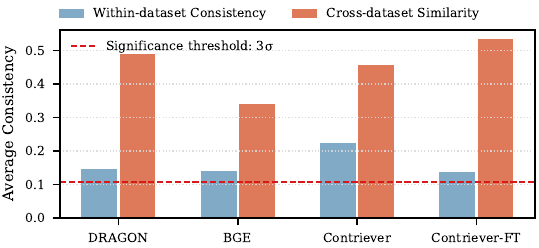}
      \caption{Cross-retriever consistency.}
      \label{fig:rq2-c}
    \end{subfigure}
    & 
    \setcounter{subfigure}{1} 
    \begin{subfigure}[t]{\linewidth}
      \centering
      \includegraphics[width=\linewidth]{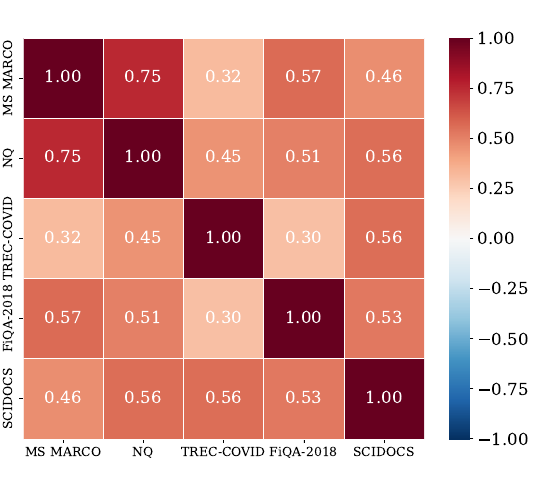}
      \caption{Cross-dataset consistency.}
      \label{fig:rq2-b}
    \end{subfigure}
  \end{tabular}
    \caption{
    The LLM–Human distinction forms a stable embedding-space direction. The plots demonstrate this consistency along three dimensions: (a) within datasets, (b) across datasets, and (c) across retrievers. All metrics shown exceeded the 3$\sigma$ threshold. 
    Plots (a, b) use the DRAGON retriever; results for all 14 datasets are in Appendix~\ref{app:embedding_full}.
    }
  \label{fig:rq2}
\end{figure}

\paragraph{Is the LLM–Human Distinction a Stable Embedding Direction?}

Unlike the positive–negative setting, the LLM–Human comparison uses semantically aligned counterparts, allowing us to directly compute pairwise displacements. For each aligned pair, we define
\[
\delta_i^{\text{LH}} 
= h_d(d_i^{\text{LLM}}) - h_d(d_i^{\text{Human}}).
\]
We then examine whether these displacements form a coherent embedding-space direction, evaluating their stability across three complementary dimensions of consistency.

\noindent\textbf{(1) Within datasets.}
We test whether displacement vectors exhibit mutual alignment by computing the average pairwise cosine similarity 
$\mathbb{E}_{i\neq j}[\cos(\delta_i^{\text{LH}},\, \delta_j^{\text{LH}})]$. 
Values exceeding the $3\sigma$ significance threshold indicate a consistent, non-random shift within each dataset (Figure~\ref{fig:rq2-a}).

\noindent\textbf{(2) Across datasets.} 
For each dataset $D$, we compute the dataset-level mean displacement 
$\overline{\delta}_{\text{LH},D} = \mathbb{E}_{d_i\in D}[\delta_i^{\text{LH}}]$, 
and evaluate cross-dataset alignment via 
$\cos(\overline{\delta}_{\text{LH},D_1},\, \overline{\delta}_{\text{LH},D_2})$, 
which tests whether datasets share the same underlying direction (Figure~\ref{fig:rq2-b}).

\noindent\textbf{(3) Across models.} As shown in Figure~\ref{fig:rq2-c}, repeating the analysis with multiple retrievers shows that the LLM–Human displacement remains coherent both within and across datasets, and consistent across all retrievers examined.

Together, these findings demonstrate that the LLM–Human distinction reflects a stable embedding direction shared across datasets and models, rather than an artifact of any specific retriever or dataset.

\paragraph{Do the Positive–Negative and LLM–Human Directions Align?}
Having established that the LLM–Human distinction corresponds to a stable embedding direction, we now test our central hypothesis: whether this direction aligns with the supervision-induced positive–negative direction, $\overline{\delta}_{\text{PN}}$. We measure this alignment by computing the cosine similarity between the mean LLM–Human direction for each dataset, $\overline{\delta}_{\text{LH},D}$, and the positive–negative direction derived from MS MARCO. As shown in Figure~\ref{fig:rq2_2-a}, the alignment is consistently strong and statistically significant across all datasets. Furthermore, this effect is not specific to a single retriever. Figure~\ref{fig:rq2_2-b} shows that the alignment remains robustly significant across retrievers. This strong, consistent alignment demonstrates that the positive-negative and LLM-human distinctions correspond to a shared direction in the embedding space. We now turn to our theoretical framework to formalize the mechanism by which this alignment emerges as a learnable shortcut for relevance, thus inducing source bias.

\subsection{Theoretical Framework: Artifact Encoding in Neural Retrievers}
\label{subsec:rq2_theory}

Building on the linguistic and embedding-space analyses, we formalize these observations in a theoretical framework. For clarity and intuition, this section presents an informal overview of our key results (see Appendix~\ref{app:theory_and_proofs} for formal statements and proofs). Our theory shows that (1)~whenever training data contains systematic artifact imbalances, the retriever necessarily learns these non-semantic cues, and (2)~these cues manifest as an approximately linear component in the retrieval score.

To illustrate this, we abstractly decompose any document $d$ into its semantic features $M_d$ and its non-semantic artifact features $A_d$ (e.g., fluency, lexical patterns). An \emph{artifact imbalance} exists if positive passages systematically differ from negative passages in their artifact features. Specifically, we define the artifact imbalance at training time as the difference between the expected artifact features of positive and negative documents:
$
\Delta_A = \mathbb{E}[A_{d^+}] - \mathbb{E}[A_{d^-}].
$
Here $A_{d^+}$ and $A_{d^-}$ represent the artifact features of positive and negative documents, respectively.

Our first key result is that such imbalance directly shapes the optimal retriever's scoring function.

\begin{proposition}[Decomposition of the Optimal Scorer, Informal]\label{prop:decom_informal}
The Bayes-optimal retrieval score $s^*(\cdot,\cdot)$, which is approximated by models trained with contrastive objectives like InfoNCE, necessarily decomposes into a semantic term and an artifact-dependent term:
\[
s^{*}(q,d) = \mathrm{Score}_{\mathrm{semantic}}(q, M_d) + \mathrm{Score}_{\mathrm{artifact}}(q, A_d).
\]
If the training data exhibit artifact imbalance ($\Delta_A \neq 0$), the artifact-dependent term is non-zero.
\end{proposition}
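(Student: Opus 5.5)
The plan is to formalize the contrastive training distribution and identify the Bayes-optimal scorer in closed form as a log density ratio. Then we split that log ratio into a semantic and an artifact part using a conditional-independence assumption, and finally argue that the artifact part cannot vanish when $\Delta_A\neq 0$.

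\emph{Step 1: the optimal scorer.} Model training as follows. For each query $q$, a positive is drawn from $p^+(d\mid q)$ and $K$ negatives are drawn i.i.d.\ from $p^-(d\mid q)$. InfoNCE is then the cross-entropy for identifying the positive among $K+1$ candidates. By the standard argument (as in CPC/InfoNCE analyses), the population minimizer over all measurable scorers satisfies
\[
s^*(q,d)=\log\frac{p^+(d\mid q)}{p^-(d\mid q)}+c(q),
\]
where $c(q)$ is a query-dependent constant that does not affect ranking. To prove this, write the posterior probability that candidate $j$ is the positive, which is proportional to $p^+(d_j\mid q)\prod_{i\neq j}p^-(d_i\mid q)$. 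Then note that cross-entropy is minimized exactly when the softmax of the scores equals this posterior. This forces $s$ to equal the log ratio up to $c(q)$.

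\emph{Step 2: the decomposition.} Represent $d$ by $(M_d,A_d)$ and factor each density as $p^\pm(d\mid q)=p^\pm(M_d\mid q)\,p^\pm(A_d\mid q,M_d)$. Adopt the modeling assumption implicit in the informal statement: given $q$, artifacts are conditionally independent of semantics under each of $p^+$ and $p^-$, i.e.\ $p^\pm(A_d\mid q,M_d)=p^\pm(A_d\mid q)$. The log ratio then splits additively:
\[
s^*(q,d)=\underbrace{\log\frac{p^+(M_d\mid q)}{p^-(M_d\mid q)}}_{\mathrm{Score}_{\mathrm{semantic}}}+\underbrace{\log\frac{p^+(A_d\mid q)}{p^-(A_d\mid q)}}_{\mathrm{Score}_{\mathrm{artifact}}}.
\]
If full conditional independence is too strong, a fallback is to keep the conditional ratio $\log p^+(A\mid q,M)/p^-(A\mid q,M)$ as the artifact term. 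The decomposition then still holds exactly, but the artifact term depends on $M_d$ as well.

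\emph{Step 3: nonvanishing.} Suppose the artifact term were identically zero (or constant in $A_d$) for almost every $q$. Then $p^+(A\mid q)=p^-(A\mid q)$, so the positive and negative artifact distributions coincide. Integrating over $q$ gives $\mathbb{E}[A_{d^+}]=\mathbb{E}[A_{d^-}]$, i.e.\ $\Delta_A=0$. The contrapositive is the claim.

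For the ``approximately linear'' reading promised in the text, one can add an exponential-family assumption $p^\pm(A\mid q)\propto\exp(\langle\eta^\pm,A\rangle)$. The artifact term then becomes $\langle\eta^+-\eta^-,A_d\rangle$ plus a constant, and $\eta^+\neq\eta^-$ exactly when the means differ.

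The main obstacle is Step 2. It requires stating cleanly what ``semantic'' versus ``artifact'' features are, so that the factorization is legitimate and the split is well defined rather than tautological. I would handle this by stating the conditional-independence assumption explicitly in the formal version. The rigor of Step 1 (existence of densities and optimization over an unrestricted function class) is routine by comparison.
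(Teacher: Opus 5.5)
Your Step 1 is the paper's Lemma almost verbatim: cross-entropy $=H(P)+\mathrm{KL}(P\Vert\pi)$, and the posterior is $\propto p^+(d_i\mid q)/p^-(d_i\mid q)$. Yours is slightly more general, since you allow query-dependent negatives.

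Your Step 2 ``fallback'' is exactly the paper's decomposition. The paper pushes the ratio through a $C^1$ bijection $T(d)=(M_d,A_d)$, so the Jacobians cancel; this is the justification your phrase ``represent $d$ by $(M_d,A_d)$'' silently needs. It then applies the chain rule to get $\phi(q,M_d)=\log\frac{p_{\mathrm{pos}}(q,M_d)}{p(q)p(M_d)}$ and $\psi(A_d\mid q,M_d)=\log\frac{p_{\mathrm{pos}}(A_d\mid q,M_d)}{p(A_d\mid M_d)}$, with no independence assumption.

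The real divergence is your main route in Step 2 together with Step 3. The paper never derives non-vanishing from $\Delta_A\neq0$. It takes $p_{\mathrm{pos}}(a\mid q,m)\neq p(a\mid m)$ on a set of positive measure as its hypothesis. Assuming $A$ is non-degenerate given $(q,M_d)$, it concludes $I(s^*;A_d\mid q,M_d)>0$, which has the same content as your non-constancy in $A_d$.

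Your conditional-independence assumption buys two things the paper lacks:
\begin{itemize}
\item an artifact term depending only on $(q,A_d)$, as the informal statement literally writes;
\item a genuine contrapositive from $\Delta_A\neq0$.
\end{itemize}
The price is that the assumption does real work, and your Step 3 does not survive the fallback. The condition $p^+(A\mid q,M)=p^-(A\mid q,M)$ is compatible with $\Delta_A\neq0$ whenever $p^+(M\mid q)\neq p^-(M\mid q)$ and $A$ correlates with $M$. For example, take $M\in\{0,1\}$ with $p^+(M=1\mid q)=0.9$ and $p^-(M=1\mid q)=0.1$, and let $A=M+\xi$ with independent noise $\xi$ under both. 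Then the optimal scorer does not depend on $A_d$ at all, even though $\Delta_A=0.8$. So if you drop independence, you must replace $\Delta_A\neq0$ by the paper's conditional-discrepancy hypothesis.

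Your exponential-family remark is an extra. The paper obtains linearity separately, via a first-order Taylor expansion of the encoder in its embedding-space proposition.
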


\begin{tcolorbox}[width=\linewidth, boxrule=0pt, top=3pt, bottom=3pt, colback=gray!15, colframe=gray!15]
\textbf{Insight 1:}
    Artifact imbalance forces the optimal retriever to encode non-semantic cues. The model learns that artifacts like high fluency are predictive of relevance, creating a shortcut.
\end{tcolorbox}

Next, we connect this decomposition to the practical implementation of dot-product retrievers.

\begin{proposition}[Embedding-Space Decomposition, Informal]\label{prop:sem-art_informal}
For a standard dot-product retriever, the retrieval score $s_{\theta}(\cdot,\cdot)$ can be approximated as a sum of a semantic and an artifact-based score:
\[
s_\theta(q,d) \; = \; \langle h_q(q), h_d(d)\rangle\;\approx\; \underbrace{\langle h_q(q),\,h_d^{\mathrm{sem}}(d)\rangle}_{\text{semantic}}
\;+\;\underbrace{\langle h_q(q),\,h_d^{\mathrm{art}}(d)\rangle}_{\text{artifact}}.
\]
\end{proposition}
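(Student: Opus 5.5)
The plan is to deduce the embedding-space form from Proposition~\ref{prop:decom_informal} through a linearization of the document encoder around the semantic content of a document. We treat the encoder as a function of the two abstract components, $h_d(d) = g(M_d, A_d)$, and define the semantic part as the embedding of an artifact-neutral version of the same content, $h_d^{\mathrm{sem}}(d) := g(M_d, \bar A)$. Here $\bar A$ is a reference artifact level, for instance the mean artifact vector over the training distribution. The artifact part is then defined as the residual, $h_d^{\mathrm{art}}(d) := h_d(d) - h_d^{\mathrm{sem}}(d)$. With these definitions the identity $\langle h_q(q), h_d(d)\rangle = \langle h_q(q), h_d^{\mathrm{sem}}(d)\rangle + \langle h_q(q), h_d^{\mathrm{art}}(d)\rangle$ is exact, because it only uses bilinearity of the inner product. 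The content of the proposition therefore lies in showing that this split is meaningful and approximately linear in the artifact features.

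\begin{itemize}
\item \emph{Step 1: linearize the artifact part.} Assume $g$ is differentiable in its second argument, and that artifact variation is small relative to semantic variation, i.e.\ $\|A_d - \bar A\|$ is small. A first-order Taylor expansion then gives
\[
h_d^{\mathrm{art}}(d) = J_A(M_d)\,(A_d - \bar A) + O(\|A_d-\bar A\|^2),
\]
where $J_A$ is the Jacobian of $g$ with respect to $A$.
\item \emph{Step 2: reduce to a shared direction.} If the Jacobian depends only weakly on $M_d$ (its variation is controlled by a Lipschitz constant), we can replace $J_A(M_d)$ by a common matrix $J$. This makes $h_d^{\mathrm{art}}(d) \approx J(A_d - \bar A)$ a document-independent linear map of the artifacts. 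In particular $\mathbb{E}[h_d^{\mathrm{art}}(d^+)-h_d^{\mathrm{art}}(d^-)] \approx J\Delta_A$, which identifies $\overline{\delta}_{\mathrm{PN}}$ with the artifact direction $J\Delta_A$ up to semantic residue.
\item \emph{Step 3: match the terms of Proposition~\ref{prop:decom_informal}.} If the trained scorer approximates $s^*$, we can match $\langle h_q(q), h_d^{\mathrm{sem}}(d)\rangle$ with $\mathrm{Score}_{\mathrm{semantic}}(q,M_d)$, since it depends on $d$ only through $M_d$. Likewise, $\langle h_q(q), J(A_d-\bar A)\rangle$ corresponds to $\mathrm{Score}_{\mathrm{artifact}}$. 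The latter is nonzero whenever $J^\top h_q(q)$ has a component along $\Delta_A$.
\end{itemize}

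We will collect the error terms into a single bound of the form
\[
\|h_q(q)\|\big(C_2\|A_d-\bar A\|^2 + L\,\|M_d-\bar M\|\,\|A_d-\bar A\|\big),
\]
where $C_2$ bounds the second derivative of $g$ in $A$ and $L$ is the Lipschitz constant of $J_A$. This bound makes the ``$\approx$'' in the statement precise.

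The main obstacle is justifying the identification in Step 3 together with the uniform Jacobian in Step 2. Nothing forces a trained encoder to separate $M$ and $A$ additively, and the semantic and artifact components may be entangled. We would first try to impose the separation through the Bayes-optimal structure itself: since $s^*$ is additive, a sufficiently expressive dot-product model that attains it must have a cross term $\langle h_q, g(M,A)-g(M,\bar A)-J(A-\bar A)\rangle$ that vanishes in expectation. Failing that, we would state the uniform-Jacobian condition as an explicit regularity assumption and keep the approximation claim only up to the resulting error bound.
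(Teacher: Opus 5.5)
Your Steps 1--2 are essentially the paper's proof. It defines $g(m,a)=h_d(T^{-1}(m,a))$, sets $h_{\mathrm{sem}}(m)=g(m,a_0)$ and $h_{\mathrm{art}}(a;m)=J_a(m,a_0)(a-a_0)$ from a first-order Taylor expansion in $a$, and then simply assumes the Jacobian is approximately independent of $m$; its remainder is only $o(\|a-a_0\|)$ under $C^1$, whereas your bound in terms of $C_2$ and $L$ is quantitative.

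The paper does not attempt your Step~3. It also does not try to derive the uniform Jacobian from the Bayes-optimal structure, which could not work anyway: the split in Proposition~\ref{prop:decom_informal} lets $\psi$ depend on $(q,M_d)$, so it holds for every scorer. Your fallback of stating that condition as an explicit regularity assumption is exactly the paper's route.
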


This decomposition can be viewed as a first-order Taylor approximation. The document encoder, though a complex non-linear model, can be locally approximated as linear in the artifact features, which is consistent with our empirical observation of a stable direction in embedding space.

\begin{tcolorbox}[width=\linewidth, boxrule=0pt, top=3pt, bottom=3pt, colback=gray!15, colframe=gray!15]
\textbf{Insight 2:}
     The artifact-based score is captured by a linear operation in the embedding space. 
\end{tcolorbox}

\paragraph{Why Other Families Do Not Exhibit Consistent Source Bias.}
Unlike relevance-supervised retrievers, other retriever families do not exhibit a consistent source bias. (1) General-purpose embedding models are trained on diverse tasks such as semantic textual similarity, natural language inference, clustering, and classification. 
Many of these objectives are symmetric: if sentence $a$ is a positive for sentence $b$, then $b$ is a positive for $a$. 
Such symmetry prevents systematic differences between “positives” and “negatives,” yielding $\Delta_A \approx 0$ and avoiding artifact-driven shortcuts. 
(2) Unsupervised retrievers like Contriever rely on self-supervised objectives constructed directly from raw corpora, where adjacent spans of text are treated as positives and other in-batch samples serve as negatives. Because no annotated positive–negative splits are involved, the training signal lacks systematic stylistic imbalance. In both cases, the artifact-dependent term in Proposition~\ref{prop:decom_informal} averages out in expectation, explaining why these models do not exhibit a consistent source bias (Section~\ref{sec:rq1}).

\paragraph{Summary.} Our analyses consistently show that source bias arises from artifact imbalance in training data. Linguistically, positives in supervision and LLM-generated passages both show lower perplexity and increased lexical specificity than their counterparts. In embedding space, the supervision-induced positive–negative direction and the LLM–human displacement align as a stable, shared axis. Our theoretical framework formalizes this observation: any artifact imbalance in training necessarily introduces a linear artifact component into the retriever’s scoring function. This explains why stylistic imbalances observed in supervision manifest as a stable embedding direction spuriously aligned with relevance, providing both a mechanistic account of source bias and a foundation for mitigation strategies. 
\section{RQ3: How can source bias be mitigated?}
\label{sec:rq3}

\begin{figure*}[t]
\begin{minipage}[t]{0.48\linewidth}
    \vspace{0pt}
    \centering
    
    \begin{subfigure}[b]{\linewidth}
      \centering
      \includegraphics[width=0.85\linewidth]{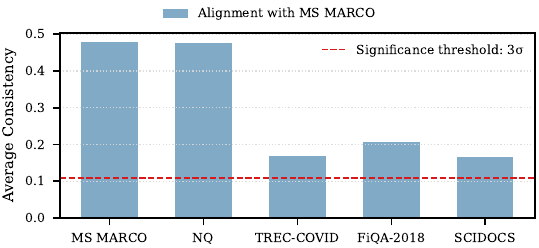}
      \caption{Cross-dataset consistency.}
      \label{fig:rq2_2-a}
    \end{subfigure}
    
    \begin{subfigure}[b]{\linewidth}
      \centering
      \includegraphics[width=0.85\linewidth]{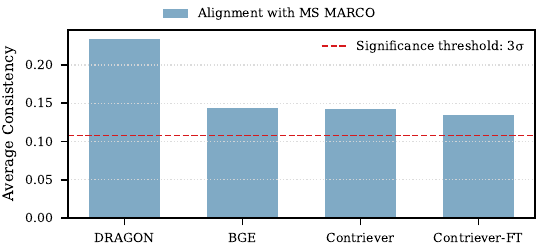}
      \caption{Cross-retriever consistency.}
      \label{fig:rq2_2-b}
    \end{subfigure}
    \caption{
    The LLM--Human displacement aligns with the positive--negative supervision direction. Panel (a) shows cross-dataset consistency, and panel (b) shows cross-retriever consistency. Across both settings, cosine similarities exceed the 3$\sigma$ threshold, confirming a stable and coherent embedding-space direction.
    }
    \label{fig:rq2_2} 
\end{minipage}
\hfill
\begin{minipage}[t]{0.48\linewidth}
    \vspace{0pt}
    \centering
    
   \caption{$\Delta$NDSR@5 results under different negative sampling strategies. “In-batch only” suppresses artifact imbalance ($\Delta_A \approx 0$), “Standard” combines in-batch and hard negatives, and “Hard-neg only” maximizes artifact imbalance. Shading in the Average row (with the color bar on the right) indicates the relative magnitude of $|\Delta$NDSR@5$|$, with \colorbox{llmred}{darker} colors representing stronger source bias relative to the “Hard-neg only’’ configuration.}
    \label{tab:rq3_train}
    
    \begin{minipage}[t]{0.88\linewidth}
      \vspace{0pt}
      \centering
      \renewcommand{\arraystretch}{1.1}
      \setlength{\tabcolsep}{4pt}
      \resizebox{\linewidth}{!}{%
        \begin{tabular}{l ccc}
        \toprule
        & In-batch only & Standard & Hard-neg only \\
        \midrule
        MS~MARCO      & 0.014  & -0.051 & -0.057 \\
        DL19          & 0.025  & -0.155 & -0.182 \\
        DL20          & 0.041  & -0.120 & -0.152 \\
        NQ            & 0.020  & -0.081 & -0.085 \\
        NFCorpus      & -0.050 & -0.068 & -0.093 \\
        TREC-COVID    & -0.182 & -0.252 & -0.285 \\
        HotpotQA      & 0.003  & 0.017  & -0.021 \\
        FiQA-2018     & -0.055 & -0.227 & -0.238 \\
        Touché-2020   & -0.077 & -0.202 & -0.193 \\ 
        DBPedia       & -0.021 & -0.041 & -0.043 \\
        SCIDOCS       & 0.010  & -0.051 & -0.035 \\
        FEVER         & 0.014  & -0.005 & -0.013 \\
        Climate-FEVER & -0.032 & -0.071 & -0.080 \\
        SciFact       & -0.032 & -0.051 & -0.053 \\
        \midrule
        \textbf{Average} &
          \avgshadetrain{-0.024}{-0.109} &
          \avgshadetrain{-0.099}{-0.109} &
          \avgshadetrain{-0.109}{-0.109} \\
        \bottomrule
        \end{tabular}%
      }
    \end{minipage}%
    \hfill
    \begin{minipage}[t]{0.08\linewidth}
      \vspace{4pt} 
      \centering
      \traincolorbarvertical[1.5]
    \end{minipage}

\end{minipage}
\end{figure*}

Building on our theoretical results, we now move from explanation to mechanism validation and bias mitigation. Proposition~\ref{prop:decom_informal} revealed that artifact imbalance ($\Delta_A \neq 0$) in supervision necessarily leads the retriever to encode non-semantic cues, while Proposition~\ref{prop:sem-art_informal} showed that these cues manifest as a linear component in embedding space. 
These insights suggest two complementary strategies: reduce $\Delta_A$ during training or suppress the artifact direction at inference. Importantly, these interventions not only mitigate source bias but also validate its underlying mechanism: if reducing $\Delta_A$ or removing the artifact direction reliably diminishes bias, this provides strong empirical support for our theoretical account. In summary, our aim is not to advance state-of-the-art debiasing, but to substantiate the mechanism of source bias and propose simple interventions that are readily applicable in practice. We therefore examine both strategies below.

\paragraph{Training-time Interventions: Controlling Artifact Imbalance ($\Delta_A$).}
We propose a simple training-time mitigation strategy: adopting \emph{in-batch only} negative sampling, where negatives are exclusively other queries’ positives from the annotated pool. This setup ensures $\mathbb{E}[A_{d^+}] \approx \mathbb{E}[A_{d^-}]$ and thus suppresses artifact imbalance ($\Delta_A \approx 0$). To evaluate its effectiveness, we contrast it against two reference settings: (1) the \emph{standard} sampling scheme widely used for training neural retrievers, which combines in-batch negatives with one mined hard negative per query and yields a moderate $\Delta_A$; and (2) a \emph{hard-neg only} setting, which draws negatives solely from the unannotated pool and maximizes $\Delta_A$. Together, these three conditions provide a controlled spectrum of artifact imbalance.

For fairness and controllability, we fine-tune BERT-based retrievers on MS~MARCO using the official BEIR pipeline~\citep{devlin2019bert,thakur2021beir}, modifying only the negative sampling strategy while keeping all other factors fixed. 
This isolates the impact of sampling on source bias.

As shown in Table~\ref{tab:rq3_train}, the in-batch only strategy substantially reduces source bias, improving the average $\Delta$NDSR@5 from -0.099 (standard sampling) to -0.024, whereas standard and hard-neg only sampling lead to progressively stronger bias. Although omitting mined hard negatives slightly impairs retrieval effectiveness (average NDCG@5 drops from 0.493 to 0.475, see Appendix~\ref{app:rq3_results}), the reduction in bias is considerable. These findings validate our theoretical account and demonstrate that mitigation at training time is indeed effective, providing a useful pivot for further exploration of debiasing strategies. Building on this, we next examine inference-time interventions that suppress artifact directions without retraining.

\begin{table*}[t]
\centering
\caption{
$\Delta$NDSR@5 results (original vs.\ debiased) across 5 datasets and 5 relevance-supervised retrievers. 
Positive values indicate a preference for human-written passages, whereas negative values indicate a preference for LLM-generated ones. In the Average row, the first line reports the mean $\Delta$NDSR@5, and the second line shows the remaining proportion of $|\Delta\text{NDSR}@5|$ after debiasing (original = 100\%). Shading in the Average row reflects the relative magnitude of $|\Delta\text{NDSR}@5|$, with \colorbox{llmred}{darker} colors indicating stronger source bias. Full results on all 14 datasets appear in Appendix~\ref{app:rq3_results}.
}
\label{tab:rq3_debias}

\begin{minipage}{0.92\textwidth}
\renewcommand{\arraystretch}{0.95}
\setlength{\tabcolsep}{3.5pt}
\resizebox{\linewidth}{!}{
\begin{tabular}{l | cc | cc | cc | cc | cc}
\toprule
\multirow{2}{*}{Dataset (↓)} 
  & \multicolumn{2}{c}{ANCE} 
  & \multicolumn{2}{c}{coCondenser} 
  & \multicolumn{2}{c}{DRAGON} 
  & \multicolumn{2}{c}{RetroMAE} 
  & \multicolumn{2}{c}{TAS-B} \\
\cmidrule(lr){2-3} \cmidrule(lr){4-5} \cmidrule(lr){6-7} \cmidrule(lr){8-9} \cmidrule(lr){10-11}
 & Original & Debias & Original & Debias & Original & Debias & Original & Debias & Original & Debias \\
\midrule
MS~MARCO       & -0.042 &  0.168 & -0.020 &  0.094 & -0.083 & -0.065 & -0.083 &  0.011 & -0.121 & -0.062 \\
TREC-COVID     & -0.162 & -0.178 & -0.340 & -0.281 & -0.134 & -0.154 & -0.194 & -0.098 & -0.328 & -0.248 \\
NQ             & -0.042 & -0.032 & -0.072 & -0.071 & -0.099 & -0.085 & -0.060 & -0.044 & -0.078 & -0.062 \\
FiQA-2018      & -0.179 & -0.159 & -0.219 & -0.263 & -0.161 & -0.154 & -0.205 & -0.201 & -0.170 & -0.182 \\
SCIDOCS        & -0.040 &  0.069 & -0.058 & -0.053 & -0.048 & -0.012 & -0.073 &  0.007 & -0.054 &  0.010 \\
\midrule
\raisebox{1.5ex}{\textbf{Average}} &
  \shortstack[c]{\avgshadeinfer{-0.093}{-0.093}\\ (100\%)} &
  \shortstack[c]{\avgshadeinfer{-0.026}{-0.093}\\ (28\%)}  &
  \shortstack[c]{\avgshadeinfer{-0.142}{-0.142}\\ (100\%)} &
  \shortstack[c]{\avgshadeinfer{-0.115}{-0.142}\\ (81\%)}  &
  \shortstack[c]{\avgshadeinfer{-0.105}{-0.105}\\ (100\%)} &
  \shortstack[c]{\avgshadeinfer{-0.094}{-0.105}\\ (90\%)}  &
  \shortstack[c]{\avgshadeinfer{-0.123}{-0.123}\\ (100\%)} &
  \shortstack[c]{\avgshadeinfer{-0.072}{-0.123}\\ (59\%)}  &
  \shortstack[c]{\avgshadeinfer{-0.150}{-0.150}\\ (100\%)} &
  \shortstack[c]{\avgshadeinfer{-0.109}{-0.150}\\ (73\%)} \\
\bottomrule
\end{tabular}}
\end{minipage}
\hfill
\begin{minipage}{0.06\textwidth}
\centering
\debiascolorbarvertical[2] 
\end{minipage}

\end{table*}

\begin{table*}[t]
\centering
\caption{NDCG@5 results (original vs. debias) on 5 datasets for 5 relevance-supervised retrievers. Full results on 14 datasets are provided in Appendix~\ref{app:rq3_results}.}
\label{tab:rq3_ndcg}
\renewcommand{\arraystretch}{0.9} 
\setlength{\tabcolsep}{3pt}
\resizebox{1.0\textwidth}{!}{
\begin{tabular}{l | cc | cc | cc | cc | cc}
\toprule
\multirow{2}{*}{Dataset (↓)} & \multicolumn{2}{c}{ANCE} & \multicolumn{2}{c}{coCondenser} & \multicolumn{2}{c}{DRAGON} & \multicolumn{2}{c}{RetroMAE} & \multicolumn{2}{c}{TAS-B} \\
\cmidrule(lr){2-3} \cmidrule(lr){4-5} \cmidrule(lr){6-7} \cmidrule(lr){8-9} \cmidrule(lr){10-11}
 & Original & Debias & Original & Debias & Original & Debias & Original & Debias & Original & Debias \\
\midrule
MS~MARCO       & 0.590 & 0.568 & 0.620 & 0.621 & 0.665 & 0.665 & 0.626 & 0.626 & 0.617 & 0.617 \\
TREC-COVID     & 0.679 & 0.690 & 0.707 & 0.695 & 0.684 & 0.681 & 0.744 & 0.737 & 0.644 & 0.638 \\
NQ             & 0.628 & 0.626 & 0.687 & 0.687 & 0.737 & 0.737 & 0.704 & 0.704 & 0.689 & 0.689 \\
FiQA-2018      & 0.255 & 0.255 & 0.244 & 0.244 & 0.323 & 0.322 & 0.278 & 0.277 & 0.257 & 0.261 \\
SCIDOCS        & 0.114 & 0.113 & 0.124 & 0.125 & 0.148 & 0.146 & 0.136 & 0.136 & 0.138 & 0.133 \\
\midrule
\textbf{Average}           & 0.453 & 0.450 & 0.477 & 0.474 & 0.511 & 0.510 & 0.497 & 0.496 & 0.468 & 0.467 \\
\bottomrule
\end{tabular}
}
\end{table*}

\paragraph{Inference-time Interventions: Suppressing Artifact Directions.}
Our analyses in Section~\ref{subsec:rq2_embedding} showed that LLM-generated passages induce a consistent displacement in embedding space. Let 
$
n = \frac{\overline{\delta}_{\text{LH}}}{\|\overline{\delta}_{\text{LH}}\|}
$
denote the normalized mean displacement between LLM rewrites and their human counterparts. In practice, we estimate $n$ by averaging displacement vectors from 1000 randomly sampled human–LLM passage pairs per dataset. This sampling size yields stable estimates across datasets while remaining computationally efficient. At inference, for passage embedding $v \in \mathbb{R}^m$ (i.e., $v = h_d(d)$), we suppress the component along $n$:
$
v' = v - \langle v , n \rangle\ n.
$

We focus on five relevance-supervised retrievers, where source bias is most pronounced and our theoretical analysis directly applies. As shown in Tables~\ref{tab:rq3_debias} and \ref{tab:rq3_ndcg}, the projection reduces source bias in most cases, while retrieval effectiveness is largely preserved. Importantly, it requires no retraining and adds negligible computational cost, as embeddings are already computed during inference. This provides a practical drop-in solution that can be readily integrated into existing retrieval systems. 

\paragraph{Summary.}  
These interventions jointly achieve mechanism validation and mitigation. Training-time sampling strategies directly manipulate $\Delta_A$, showing a consistent trend where larger imbalance leads to stronger bias, thereby establishing a clear link between supervision artifacts and source bias. Inference-time projection complements this by suppressing artifact-driven directions in embedding space, reducing bias with negligible cost and no retraining. Together, these complementary approaches both reinforce our theoretical account and provide practical strategies for mitigating source bias in deployed retrieval systems.

\section{conclusion}
This paper re-examines the origins of source bias in neural retrieval and shows that it is not an inherent property but a learned consequence of artifact imbalance in supervised training data. Through theoretical analysis and empirical validation, we demonstrate how contrastive objectives encode non-semantic artifacts and how LLM-generated text mirrors these artifacts, producing a consistent biased direction in embedding space. 
Building on this insight, we introduce two mitigation methods: (1) a training-time negative sampling control that effectively mitigates source bias, and (2) an inference-time projection that achieves similar debiasing strength while largely preserving retrieval performance. Our findings indicate that artifact imbalance is an important factor behind source bias, motivating the development of de-artifacted datasets and training practices for more robust and fair retrieval systems. More broadly, the analyses and mitigation strategies explored here may inform the study of other spurious correlations across domains.

\newpage
\bibliography{iclr2026_conference}
\bibliographystyle{iclr2026_conference}

\appendix
\section{The Use of Large Language Models (LLMs)}

In this study, we employed Large Language Models (LLMs) as an AI writing assistant, using them strictly to improve the clarity and readability of our textual expressions. The models were not used for research ideation, literature retrieval, or discovery, nor to generate any substantive suggestions.
\section{Reproducibility Resources}
\label{app:resources}

To ensure reproducibility, we provide the full list of datasets and model checkpoints used in this work. 
All datasets and models are obtained from publicly available HuggingFace releases or their official websites. 
Our usage strictly follows the respective licenses and research-only terms of the original sources. 
Tables~\ref{tab:datasets_links} and~\ref{tab:models_links} provide direct links for reference.

\label{app:datasets}
\begin{table*}[h]
    \centering
    \caption{Datasets used in this paper (Cocktail versions) and their HuggingFace links.}
    \label{tab:datasets_links}
    \resizebox{\textwidth}{!}{
    \begin{tabular}{ll}
        \toprule
        \textbf{Dataset} & \textbf{HuggingFace Link} \\
        \midrule
        MS MARCO~\citep{DBLP:conf/nips/NguyenRSGTMD16} & \url{https://huggingface.co/datasets/IR-Cocktail/msmarco} \\
        TREC-DL’19~\citep{craswell2020overviewtrec2019deep} & \url{https://huggingface.co/datasets/IR-Cocktail/dl19} \\
        TREC-DL’20~\citep{craswell2021overviewtrec2020deep} & \url{https://huggingface.co/datasets/IR-Cocktail/dl20} \\
        Natural Questions~\citep{kwiatkowski2019natural} & \url{https://huggingface.co/datasets/IR-Cocktail/nq} \\
        NFCorpus~\citep{boteva2016full} & \url{https://huggingface.co/datasets/IR-Cocktail/nfcorpus} \\
        TREC-COVID~\citep{voorhees2021trec} & \url{https://huggingface.co/datasets/IR-Cocktail/trec-covid} \\
        HotpotQA~\citep{yang2018hotpotqa} & \url{https://huggingface.co/datasets/IR-Cocktail/hotpotqa} \\
        FiQA-2018~\citep{maia201818} & \url{https://huggingface.co/datasets/IR-Cocktail/fiqa} \\
        Touché-2020~\citep{bondarenko2020overview} & \url{https://huggingface.co/datasets/IR-Cocktail/webis-touche2020} \\
        DBpedia-Entity~\citep{hasibi2017dbpedia} & \url{https://huggingface.co/datasets/IR-Cocktail/dbpedia-entity} \\
        SCIDOCS~\citep{cohan2020specter} & \url{https://huggingface.co/datasets/IR-Cocktail/scidocs} \\
        FEVER~\citep{thorne2018fever} & \url{https://huggingface.co/datasets/IR-Cocktail/fever} \\
        Climate-FEVER~\citep{diggelmann2020climate} & \url{https://huggingface.co/datasets/IR-Cocktail/climate-fever} \\
        SciFact~\citep{wadden2020fact} & \url{https://huggingface.co/datasets/IR-Cocktail/scifact} \\
        \bottomrule
    \end{tabular}
    }
\end{table*}

\label{app:models}
\begin{table*}[h]
    \centering
    \caption{Dense retriever checkpoints used in this paper and their HuggingFace links.}
    \label{tab:models_links}
    \resizebox{\textwidth}{!}{
    \begin{tabular}{ll}
        \toprule
        \textbf{Model} & \textbf{HuggingFace Link} \\
        \midrule
        \multicolumn{2}{l}{\textit{Relevance-Supervised Retrievers}} \\
        ANCE~\citep{xiong2020approximate} & \url{https://huggingface.co/sentence-transformers/msmarco-roberta-base-ance-firstp} \\
        TAS-B~\citep{hofstatter2021efficiently} & \url{https://huggingface.co/sentence-transformers/msmarco-distilbert-base-tas-b} \\
        coCondenser~\citep{gao2021unsupervised} & \url{https://huggingface.co/sentence-transformers/msmarco-bert-co-condensor} \\
        RetroMAE~\citep{xiao2022retromae} & \url{https://huggingface.co/nthakur/RetroMAE_BEIR} \\
        DRAGON (query encoder)~\citep{lin2023train} & \url{https://huggingface.co/nthakur/dragon-plus-query-encoder} \\
        DRAGON (corpus encoder)~\citep{lin2023train} & \url{https://huggingface.co/nthakur/dragon-plus-context-encoder} \\
        \midrule
        \multicolumn{2}{l}{\textit{General-Purpose Embedding Models}} \\
        BGE-base~\citep{bge_embedding} & \url{https://huggingface.co/BAAI/bge-base-en-v1.5} \\
        BCE~\citep{youdao_bcembedding_2023} & \url{https://huggingface.co/maidalun1020/bce-embedding-base_v1} \\
        GTE~\citep{li2023towards} & \url{https://huggingface.co/thenlper/gte-base} \\
        E5~\citep{wang2022text} & \url{https://huggingface.co/intfloat/e5-base-v2} \\
        M3E~\citep{Moka_Massive_Mixed_Embedding} & \url{https://huggingface.co/moka-ai/m3e-base} \\
        \midrule
        \multicolumn{2}{l}{\textit{Unsupervised Retrievers}} \\
        Contriever~\citep{izacard2021unsupervised} & \url{https://huggingface.co/nishimoto/contriever-sentencetransformer} \\
        E5-Unsupervised~\citep{wang2022text} & \url{https://huggingface.co/intfloat/e5-base-unsupervised} \\
        SimCSE~\citep{gao2021simcse} & \url{https://huggingface.co/princeton-nlp/unsup-simcse-bert-base-uncased} \\
        \bottomrule
    \end{tabular}
    }
\end{table*}
\section{Dataset Statistics}
\label{app:dataset_stats}

Table~\ref{tab:dataset_stat} summarizes the statistics of the 14 datasets used in this paper. 
This table is adapted from the Cocktail benchmark~\citep{dai2024cocktail}, with minor modifications. 

\begin{table*}[t]
\centering
\caption{Statistics of the 14 datasets in the Cocktail benchmark used in this paper. 
Avg. D/Q denotes the average number of relevant documents per query. 
This table is adapted from \citet{dai2024cocktail}.}
\label{tab:dataset_stat}
\renewcommand{\arraystretch}{1.2}
\resizebox{\textwidth}{!}{
\begin{tabular}{lcccccccc}
\toprule
\textbf{Dataset} & \textbf{Domain} & \textbf{Task} & \textbf{Relevancy} & \textbf{\#Pairs} & \textbf{\#Queries} & \textbf{\#Corpus} & \textbf{Avg. D/Q} & \textbf{Avg. Length (Q / Human / LLM)} \\
\midrule
MS MARCO      & Misc.       & Passage Retrieval     & Binary   & 532,663  & 6,979  & 542,203  & 1.1   & 6.0 / 58.1 / 55.1 \\
DL19          & Misc.       & Passage Retrieval     & Binary   & -        & 43     & 542,203  & 95.4  & 5.4 / 58.1 / 55.1 \\
DL20          & Misc.       & Passage Retrieval     & Binary   & -        & 54     & 542,203  & 66.8  & 6.0 / 58.1 / 55.1 \\
TREC-COVID    & Biomedical  & Biomedical IR         & 3-level  & -        & 50     & 128,585  & 430.1 & 10.6 / 197.6 / 165.9 \\
NFCorpus      & Biomedical  & Biomedical IR         & 3-level  & 110,575  & 323    & 3,633    & 38.2  & 3.3 / 221.0 / 206.7 \\
NQ            & Wikipedia   & QA                    & Binary   & -        & 3,446  & 104,194  & 1.2   & 9.2 / 86.9 / 81.0 \\
HotpotQA      & Wikipedia   & QA                    & Binary   & 169,963  & 7,405  & 111,107  & 2.0   & 17.7 / 67.9 / 66.6 \\
FiQA-2018     & Finance     & QA                    & Binary   & 14,045   & 648    & 57,450   & 2.6   & 10.8 / 133.2 / 107.8 \\
Touché-2020   & Misc.       & Argument Retrieval    & 3-level  & -        & 49     & 101,922  & 18.4  & 6.6 / 165.4 / 134.4 \\
DBpedia       & Wikipedia   & Entity Retrieval      & 3-level  & -        & 400    & 145,037  & 37.3  & 5.4 / 53.1 / 54.0 \\
SCIDOCS       & Scientific  & Citation Prediction   & Binary   & -        & 1,000  & 25,259   & 4.7   & 9.4 / 169.7 / 161.8 \\
FEVER         & Wikipedia   & Fact Checking         & Binary   & 140,079  & 6,666  & 114,529  & 1.2   & 8.1 / 113.4 / 91.1 \\
Climate-FEVER & Wikipedia   & Fact Checking         & Binary   & -        & 1,535  & 101,339  & 3.0   & 20.2 / 99.4 / 81.3 \\
SciFact       & Scientific  & Fact Checking         & Binary   & 919      & 300    & 5,183    & 1.1   & 12.4 / 201.8 / 192.7 \\
\bottomrule
\end{tabular}
}
\end{table*}
\section{Retrieval Effectiveness of Evaluated Models}
\label{app:rq1_effectiveness}

For completeness, we report the retrieval effectiveness of all evaluated models on the Cocktail benchmark. 
Table~\ref{tab:ndcg_main} presents NDCG@5 across 14 datasets for the 13 retrievers spanning the three model families. 
Table~\ref{tab:ndcg_ctrl} further reports results after fine-tuning unsupervised retrievers on MS~MARCO. 
These results complement the source preference analyses in Section~\ref{sec:rq1}. 

\begin{table*}[t]
\centering
\caption{NDCG@5 results across 14 datasets for 13 dense retrievers. Higher is better.}
\label{tab:ndcg_main}
\renewcommand{\arraystretch}{0.95}
\setlength{\tabcolsep}{3.5pt}
\resizebox{1.0\textwidth}{!}{
\begin{tabular}{lccccc | ccccc | ccc}
\toprule
\multirow{2}{*}{Dataset (↓)} & \multicolumn{5}{c}{Relevance-Supervised Retrievers} & \multicolumn{5}{c}{General-Purpose Embedding Models} & \multicolumn{3}{c}{Unsupervised Retrievers} \\
\cmidrule(lr){2-6} \cmidrule(lr){7-11} \cmidrule(lr){12-14}
 & ANCE & TAS-B & coCondenser & RetroMAE & DRAGON & BGE & BCE & GTE & E5 & M3E & Contriever & E5-Unsup & SimCSE \\
\midrule
MS~MARCO       & 0.647 & 0.680 & 0.683 & 0.688 & 0.735 & 0.688 & 0.590 & 0.688 & 0.702 & 0.473 & 0.504 & 0.575 & 0.245 \\
DL19           & 0.686 & 0.760 & 0.734 & 0.743 & 0.771 & 0.755 & 0.708 & 0.750 & 0.747 & 0.507 & 0.515 & 0.624 & 0.346 \\
DL20           & 0.701 & 0.724 & 0.708 & 0.751 & 0.758 & 0.729 & 0.651 & 0.718 & 0.743 & 0.489 & 0.492 & 0.597 & 0.289 \\
NQ             & 0.640 & 0.708 & 0.711 & 0.746 & 0.790 & 0.778 & 0.625 & 0.789 & 0.790 & 0.494 & 0.623 & 0.737 & 0.353 \\
NFCorpus       & 0.266 & 0.340 & 0.345 & 0.336 & 0.389 & 0.403 & 0.275 & 0.394 & 0.368 & 0.257 & 0.339 & 0.371 & 0.109 \\
TREC-COVID     & 0.671 & 0.670 & 0.677 & 0.735 & 0.678 & 0.783 & 0.574 & 0.763 & 0.714 & 0.390 & 0.391 & 0.605 & 0.296 \\
HotpotQA       & 0.553 & 0.705 & 0.663 & 0.747 & 0.799 & 0.792 & 0.533 & 0.761 & 0.801 & 0.575 & 0.650 & 0.668 & 0.369 \\
FiQA-2018      & 0.275 & 0.408 & 0.467 & 0.498 & 0.529 & 0.384 & 0.285 & 0.380 & 0.373 & 0.366 & 0.225 & 0.373 & 0.093 \\
Touché-2020    & 0.479 & 0.427 & 0.349 & 0.441 & 0.390 & 0.402 & 0.333 & 0.423 & 0.411 & 0.242 & 0.308 & 0.333 & 0.252 \\
DBpedia        & 0.408 & 0.493 & 0.493 & 0.528 & 0.533 & 0.514 & 0.360 & 0.514 & 0.541 & 0.370 & 0.427 & 0.488 & 0.259 \\
SCIDOCS        & 0.095 & 0.111 & 0.102 & 0.116 & 0.123 & 0.177 & 0.118 & 0.190 & 0.141 & 0.069 & 0.114 & 0.174 & 0.041 \\
FEVER          & 0.820 & 0.835 & 0.842 & 0.870 & 0.876 & 0.928 & 0.682 & 0.924 & 0.905 & 0.865 & 0.878 & 0.925 & 0.510 \\
Climate-FEVER  & 0.270 & 0.306 & 0.255 & 0.311 & 0.318 & 0.368 & 0.274 & 0.373 & 0.303 & 0.161 & 0.223 & 0.264 & 0.195 \\
SciFact        & 0.465 & 0.602 & 0.549 & 0.611 & 0.631 & 0.715 & 0.533 & 0.732 & 0.688 & 0.448 & 0.614 & 0.719 & 0.239 \\
\bottomrule
\end{tabular}
}
\end{table*}

\begin{table*}[t]
\centering
\caption{NDCG@5 results of unsupervised retrievers after MS~MARCO fine-tuning, corresponding to the same base models in Table~\ref{tab:ndcg_main}. The ``-FT'' suffix denotes fine-tuning on MS~MARCO.}
\label{tab:ndcg_ctrl}
\renewcommand{\arraystretch}{0.95}
\setlength{\tabcolsep}{4pt}
\resizebox{0.4\textwidth}{!}{
\begin{tabular}{l ccc}
\toprule
\multirow{2}{*}{Dataset (↓)} & \multicolumn{3}{c}{Relevance-Supervised Retrievers} \\
\cmidrule(lr){2-4}
& Contriever-FT & E5-FT & SimCSE-FT \\
\midrule
MS~MARCO      & 0.676 & 0.711 & 0.630 \\
DL19          & 0.696 & 0.763 & 0.727 \\
DL20          & 0.673 & 0.720 & 0.703 \\
NQ            & 0.732 & 0.764 & 0.670 \\
NFCorpus      & 0.339 & 0.378 & 0.279 \\
TREC-COVID    & 0.446 & 0.731 & 0.590 \\
HotpotQA      & 0.712 & 0.735 & 0.577 \\
FiQA-2018     & 0.255 & 0.336 & 0.220 \\
Touché-2020   & 0.347 & 0.428 & 0.389 \\
DBpedia       & 0.495 & 0.532 & 0.444 \\
SCIDOCS       & 0.117 & 0.138 & 0.083 \\
FEVER         & 0.857 & 0.895 & 0.837 \\
Climate-FEVER & 0.289 & 0.312 & 0.261 \\
SciFact       & 0.593 & 0.679 & 0.470 \\
\bottomrule
\end{tabular}
}
\end{table*}
\section{Formal Statements and Proofs}
\label{app:theory_and_proofs}

We formalize the intuition that artifact imbalance biases retrieval by analyzing how it affects the retriever’s learning objective in three steps: (1) derive the Bayes-optimal retrieval scorer, (2) decompose it into semantic and artifact terms, and (3) relate this decomposition to an embedding-space view that bridges theory with practical retriever representations.

\paragraph{Notation and Setting.}
Let $q$ denote a query and $d$ a document. Each document $d$ is associated with semantic features $M_d$ and artifact features $A_d$ (e.g., perplexity, IDF profile, stylistic attributes), both treated as random vectors. 
We consider dense retrievers consisting of a dual-encoder and a scoring function. The dual-encoder maps queries and documents into embeddings $h_q(q), h_d(d)\in\mathbb{R}^m$, and a typical scoring function is the inner product $s_\theta(q,d)=\langle h_q(q),h_d(d)\rangle$. 

Training relies on positive and negative query–document pairs. Let $p_{\mathrm{pos}}(q,d)$ denote the distribution of positive pairs, and let $p(q)p(d)$ be the reference distribution given by independent sampling of queries and documents. Positives $(q,d^+)$ are drawn from $p_{\mathrm{pos}}(q,d)$, while negatives $(q,d^-)$ are sampled from $p(q)p(d)$—a standard abstraction of in-batch and hard-negative schemes. We define the \emph{artifact imbalance} at training time as 
$
\Delta_A = \mathbb{E}[A_{d^+}] - \mathbb{E}[A_{d^-}].
$

\paragraph{Step 1: Optimal scorer under InfoNCE.}
InfoNCE is a widely used contrastive learning objective, which encourages the retriever to assign higher scores to positive pairs $(q,d^+)$ than to negatives $(q,d^-)$, thereby pulling queries closer to their relevant documents while pushing them away from irrelevant ones. The Bayes-optimal retriever is therefore given by the following lemma.

\begin{lemma}\label{lem:density}
For contrastive learning with negatives sampled independently from $p(d)$, the Bayes-optimal scorer of a dense retriever is
$
s^{*}(q,d)= \log \frac{p_{\mathrm{pos}}(q,d)}{p(q)p(d)} + C,
$
where $C$ is an additive constant that does not depend on $d$.
\end{lemma}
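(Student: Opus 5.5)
The plan is the standard InfoNCE optimality argument (as in the CPC analysis of van den Oord et al.), adapted to the query–document setting. Fix a query $q$. One training example consists of a candidate set $\{d_1,\dots,d_K\}$ where exactly one index $j$, chosen uniformly at random, is the positive drawn from $p_{\mathrm{pos}}(d\mid q)=p_{\mathrm{pos}}(q,d)/p(q)$. The remaining $K-1$ documents are negatives drawn i.i.d.\ from $p(d)$. The InfoNCE loss is the cross-entropy of the softmax
\[
\frac{\exp s(q,d_j)}{\sum_{k}\exp s(q,d_k)}
\]
against the true index $j$. The first step is to observe that the population loss is an expected log-loss for the categorical variable $j$ given $(q,d_1,\dots,d_K)$. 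By Gibbs' inequality, the expected cross-entropy is minimized exactly when the softmax equals the true posterior $P(j\mid q,d_{1:K})$.

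The second step computes this posterior with Bayes' rule. The joint likelihood of the candidate set when index $j$ is positive equals $p_{\mathrm{pos}}(d_j\mid q)\prod_{k\neq j}p(d_k)$. Dividing by $\prod_k p(d_k)$, which is common to all $j$, gives
\[
P(j\mid q,d_{1:K}) = \frac{r(q,d_j)}{\sum_k r(q,d_k)},\qquad r(q,d)=\frac{p_{\mathrm{pos}}(d\mid q)}{p(d)}=\frac{p_{\mathrm{pos}}(q,d)}{p(q)p(d)}.
\]

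The third step matches the two softmaxes. Suppose $\exp s(q,\cdot)\propto r(q,\cdot)$ for each fixed $q$, that is,
\[
s(q,d)=\log\frac{p_{\mathrm{pos}}(q,d)}{p(q)p(d)}+C(q).
\]
Then the model softmax coincides with the posterior, so this $s$ attains the Bayes-optimal loss. Conversely, softmax is invariant only to additive shifts that are constant across the candidates. Requiring equality for all candidate sets (on the support) therefore forces $s$ to equal the log density ratio up to a term independent of $d$. This is exactly the constant $C$ in Lemma~\ref{lem:density}. It may depend on $q$, which is harmless because it does not affect ranking over documents.

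I expect the main obstacle to be rigor in the uniqueness direction rather than the computation itself. Optimality of the loss only pins the softmax down almost everywhere on tuples drawn from the mixture distribution. To pass to pointwise identification, I would compare two candidate documents $d,d'$ placed in the same tuple: the ratio of their softmax weights fixes $s(q,d)-s(q,d')$ to be $\log r(q,d)-\log r(q,d')$ almost everywhere. I would also state the needed support/absolute-continuity assumption ($p_{\mathrm{pos}}(\cdot\mid q)\ll p(\cdot)$) so that the density ratio is well defined. A remark would note that for finite $K$ this is the minimizer over all measurable scorers, which is the sense of ``Bayes-optimal'' used here.
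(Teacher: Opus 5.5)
Your proposal is correct and follows the paper's proof essentially step for step. The paper likewise writes the InfoNCE risk as $H(P)+\mathrm{KL}(P\Vert\pi)$ (your Gibbs-inequality step) and computes the posterior $P_i\propto p_{\mathrm{pos}}(d_i\mid q)/p(d_i)$ by Bayes' rule. It then absorbs every candidate-independent term into the constant, including $\log p(q)-\log p_{\mathrm{pos}}(q)$, which your identity $p_{\mathrm{pos}}(d\mid q)=p_{\mathrm{pos}}(q,d)/p(q)$ implicitly sets to zero; your pairwise-ratio argument for uniqueness is actually a bit more careful than the paper, which simply asserts $s(q,d_i)=\log P_i+C(q,\bm d)$ with a batch-dependent constant and absorbs it.
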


\begin{center}
\begin{tcolorbox}[width=1.0\linewidth, boxrule=0pt, top=3pt, bottom=3pt, colback=gray!20, colframe=gray!20]
\textbf{Insight 1:}
    Retriever training with InfoNCE is equivalent to estimating a log-density ratio.
\end{tcolorbox}
\end{center}

\paragraph{Step 2: Decomposition into semantic and artifact terms.}

Building on this formulation, we view each document as consisting of semantic features $M_d$ and artifact features $A_d$, under which the density-ratio admits the following decomposition. In the main-text informal statement, these two terms are denoted $\text{Score}_{\mathrm{semantic}}(q,M_d)$ and $\text{Score}_{\mathrm{artifact}}(q,A_d)$. Here, $\phi(q,M_d)$ and $\psi(A_d \mid q,M_d)$ provide their formal counterparts.

\begin{proposition}[Formal version of Proposition~\ref{prop:decom_informal}]\label{prop:decom}
Let $T(d)=(M_d,A_d)$ be a measurable mapping decomposing a document into semantic and artifact features. 
Then
$
\log \frac{p_{\mathrm{pos}}(q,d)}{p(q)p(d)}
= \underbrace{\phi(q,M_d)}_{\text{semantic}}
+ \underbrace{\psi\big(A_d \,\big|\, q,M_d\big)}_{\text{artifact}}.
$
If the training sampler induces artifact imbalance (e.g., $\Delta_A \ne 0$), then the Bayes-optimal scorer necessarily carries an artifact-dependent term.
In particular,
$
I\!\left(s^{*}(q,d); A_d \mid q,M_d\right) > 0,
$
where $I(\cdot;\cdot\mid\cdot)$ denotes conditional mutual information.
\end{proposition}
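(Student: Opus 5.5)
The plan is to get the decomposition from the chain rule for densities applied to the pushforward of the positive and reference distributions under $T$. We work at the level of features and assume the scorer depends on $d$ only through $T(d)=(M_d,A_d)$. Equivalently, the density ratio is $T$-measurable, which holds when $T$ is sufficient for relevance; otherwise we read every density below as that of $T(d)$. Write $p_{\mathrm{pos}}(q,m,a)$ and $p(q)p(m,a)$ for these pushforwards. Factor each joint density as
\[
p_{\mathrm{pos}}(q,m,a)=p_{\mathrm{pos}}(q,m)\,p_{\mathrm{pos}}(a\mid q,m),\qquad p(q)p(m,a)=p(q)p(m)\,p(a\mid m).
\]
Taking logarithms and defining
\[
\phi(q,m)=\log\frac{p_{\mathrm{pos}}(q,m)}{p(q)p(m)},\qquad \psi(a\mid q,m)=\log\frac{p_{\mathrm{pos}}(a\mid q,m)}{p(a\mid m)}
\]
gives the stated identity exactly, wherever the densities are positive (absolute continuity of $p_{\mathrm{pos}}$ with respect to $p(q)p(d)$). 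Lemma~\ref{lem:density} then transfers this to $s^*$ up to the additive constant $C$.

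Next, we show that imbalance forces $\psi$ to be non-trivial, arguing by contrapositive. Suppose $\psi(\cdot\mid q,M_d)$ does not depend on $A_d$, i.e.\ equals some $c(q,m)$ almost surely. Since both conditional densities integrate to one in $a$, $e^{c}=1$, so $\psi\equiv 0$ and $p_{\mathrm{pos}}(a\mid q,m)=p(a\mid m)$. Integrating $a$ against this gives $\mathbb{E}[A_{d^+}]=\mathbb{E}_{(q,m)\sim p_{\mathrm{pos}}}\mathbb{E}_{p(a\mid m)}[A]$. To conclude $\Delta_A=0$ we also need this to equal $\mathbb{E}_{p(d)}[A]$, and this is where the semantic marginals enter. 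The cleanest route is to read the condition $\Delta_A\neq 0$ as the intended imbalance condition: the artifact distribution of positives differs from that of negatives \emph{conditionally on semantics}, for example under a matched-semantics assumption $p_{\mathrm{pos}}(m)=p(m)$. This is the BM25 topical-balancing idealization the paper uses. With this hypothesis made explicit, $\psi\equiv0$ contradicts $\Delta_A\neq0$. Hence $\psi$ varies non-trivially in $A_d$ on a set of positive measure.

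Finally, we pass from non-constancy to mutual information. Conditional on $(q,M_d)$, the score $s^*=C+\phi(q,M_d)+\psi(A_d\mid q,M_d)$ is a shift of the function $A_d\mapsto\psi(A_d\mid q,M_d)$. This function is non-constant on a positive-measure set of $(q,M_d)$, and $A_d$ is not degenerate there, because a degenerate $A_d$ would make $\psi$ constant. So on that set $s^*$ is a non-constant function of $A_d$, which makes $s^*$ and $A_d$ dependent: $I(f(X);X)=H(f(X))>0$ in the discrete case, or use the KL characterization in general. Averaging over $(q,M_d)$ yields $I(s^*;A_d\mid q,M_d)>0$.

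The main obstacle is the middle step. A raw mean difference $\Delta_A\neq0$ can in principle be fully explained by a semantic shift, with artifacts correlated with topic, so that $\psi\equiv0$. I would therefore state explicitly, as a standing assumption, that the imbalance persists given $M_d$, i.e.\ $p_{\mathrm{pos}}(a\mid q,m)\neq p(a\mid m)$ on a set of positive mass. I would also remark that the balanced-negative construction of Section~\ref{subsec:rq2_embedding} is what justifies reading $\Delta_A\neq0$ this way.
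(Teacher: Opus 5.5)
Your proposal is correct and follows the paper's route. The steps match: the same chain-rule factorization, where your $\phi$ and $\psi=\log\frac{p_{\mathrm{pos}}(a\mid q,m)}{p(a\mid m)}$ coincide with the paper's terms once its add-and-subtract form is simplified; the same operative hypothesis $p_{\mathrm{pos}}(a\mid q,m)\neq p(a\mid m)$ on a set of positive measure; and the same closing step that a non-constant deterministic function of $A_d$ given $(q,M_d)$ has positive conditional mutual information. Your refinements are sound and slightly more careful than the paper's: you use pushforward densities under a sufficiency assumption rather than a $C^1$ change of variables, you argue dependence via the KL characterization rather than the identity $H(s^*\mid q,m,a)=0$ (which holds as written only for discrete entropy), and you correctly flag that a raw gap $\Delta_A\neq 0$ can be fully explained by a semantic shift, a point the paper's proof also sidesteps by assuming the conditional imbalance directly.
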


\begin{center}
\begin{tcolorbox}[width=1.0\linewidth, boxrule=0pt, top=3pt, bottom=3pt, colback=gray!20, colframe=gray!20]
\textbf{Insight 2:}
    Whenever artifact imbalance exists, the Bayes-optimal scorer necessarily carries an artifact-dependent term.
\end{tcolorbox}
\end{center}

\paragraph{Step 3: An idealized embedding-space view.} To translate the above decomposition into an embedding‐space view, we focus on the dot-product retriever.  This corresponds to the informal decomposition $h_d^{\mathrm{sem}}(d)$ and $h_d^{\mathrm{art}}(d)$ in the main text, with $h_{\mathrm{sem}}(M_d)$ and $h_{\mathrm{art}}(A_d)$ making the dependence on the underlying features explicit.

\begin{proposition}[Formal version of Proposition~\ref{prop:sem-art_informal}]\label{prop:sem-art}
For a dot-product retriever with query encoder $h_q$ and passage encoder $h_d$, 
suppose each passage $d$ can be abstractly decomposed into semantic features $M_d$ and artifact features $A_d$. 
Then, under a linear approximation,
$
s_\theta(q,d) \;=\; \underbrace{\langle h_q(q),\,h_{\mathrm{sem}}(M_d)\rangle}_{\text{semantic}}
\;+\;\underbrace{\langle h_q(q),\,h_{\mathrm{art}}(A_d)\rangle}_{\text{artifact (linear)}} ,
$
where $h_{\mathrm{sem}}(M_d)$ and $h_{\mathrm{art}}(A_d)$ denote the semantic and artifact representations, respectively.
\end{proposition}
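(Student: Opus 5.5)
The plan is to treat Proposition~\ref{prop:sem-art} as a first-order (Taylor) expansion of the passage encoder around a reference point in feature space, combined with bilinearity of the inner product. I will first make the abstract decomposition concrete. Since $T(d)=(M_d,A_d)$ is the measurable decomposition of Proposition~\ref{prop:decom}, assume the passage encoder factors through it: $h_d(d)=g(M_d,A_d)$ for some map $g$ that is differentiable in its second argument. This is the idealization the statement calls ``abstractly decomposed.'' Fix a reference artifact value $a_0$, for instance $\mathbb{E}[A_{d^-}]$ or the artifact profile of a canonical human-written passage. Then expand
\[
g(M_d,A_d)=g(M_d,a_0)+J_A(M_d,a_0)\,(A_d-a_0)+R(M_d,A_d),
\]
where $J_A$ is the Jacobian with respect to the artifact coordinates and $\|R\|=O(\|A_d-a_0\|^2)$.

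The second step is to remove the dependence of the Jacobian on semantics, since that is what makes the artifact term a function of $A_d$ alone. Here the linear approximation is assumed to hold uniformly: $J_A(M_d,a_0)\approx J$ is constant across passages. I will justify this by the empirical finding in Section~\ref{subsec:rq2_embedding} that the LLM--Human displacements $\delta_i^{\text{LH}}$ share a stable direction across pairs, datasets, and retrievers. With this, define
\[
h_{\mathrm{sem}}(M_d):=g(M_d,a_0),\qquad h_{\mathrm{art}}(A_d):=J\,(A_d-a_0).
\]
Dropping $R$ and using bilinearity of $\langle h_q(q),\cdot\rangle$ gives
\[
s_\theta(q,d)=\langle h_q(q),h_{\mathrm{sem}}(M_d)\rangle+\langle h_q(q),h_{\mathrm{art}}(A_d)\rangle,
\]
which is the claimed identity. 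The artifact term is linear in $A_d$. I will also note the immediate corollary for semantically matched pairs: $\delta^{\text{LH}}\approx J(A_{d^{\text{LLM}}}-A_{d^{\text{Human}}})$, so the displacement lies in the column space of $J$. For a one-dimensional dominant artifact this is a single direction $n$, which motivates the projection $v'=v-\langle v,n\rangle n$ of Section~\ref{sec:rq3}.

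To connect the result to Proposition~\ref{prop:decom}, I will observe that a trained retriever approximates $s^*$. Its artifact part then approximates $\psi(A_d\mid q,M_d)$, which is nonzero whenever $\Delta_A\neq0$. Linearizing $\psi$ in $A_d$ gives a query-dependent coefficient, and in the dot-product model this coefficient is realized as $\langle h_q(q),J\,\cdot\rangle$.

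The main obstacle is the uniformity step. In general $J_A$ depends on $M_d$, and the remainder $R$ need not be small because artifact differences are not infinitesimal. My approach is to state these two conditions explicitly as the hypotheses that give ``linear approximation'' its meaning: bounded second derivatives in $A$, and $\sup_{M}\|J_A(M,a_0)-J\|\le\epsilon$. Under these hypotheses the error in $s_\theta$ is bounded by $\|h_q(q)\|\,(\epsilon\|A_d-a_0\|+O(\|A_d-a_0\|^2))$. I will then argue that this error is small in the regime the paper studies, as supported by the empirical coherence of the displacements rather than derived from first principles.
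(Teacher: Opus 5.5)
Your proposal is correct and follows essentially the same route as the paper: write the encoder as $g(M_d,A_d)$ (the paper obtains this as $g=h_d\circ T^{-1}$ with $T$ a $C^1$ bijection onto its image), Taylor-expand to first order in the artifact coordinates about a reference $a_0$, set $h_{\mathrm{sem}}(M_d)=g(M_d,a_0)$ and $h_{\mathrm{art}}(A_d)=J(A_d-a_0)$ under the assumption that the artifact Jacobian is approximately independent of $M_d$, and conclude by bilinearity of the inner product. The only difference is that you make the idealization quantitative (bounded second derivatives giving an $O(\|A_d-a_0\|^2)$ remainder, plus a uniform bound $\epsilon$ on the Jacobian's variation, hence an explicit Cauchy--Schwarz error bound), whereas the paper assumes only $C^1$ regularity with an $o(\|A_d-a_0\|)$ remainder and states the Jacobian assumption informally.
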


\begin{center}
\begin{tcolorbox}[width=1.0\linewidth, boxrule=0pt, top=3pt, bottom=3pt, colback=gray!20, colframe=gray!20]
\textbf{Insight 3:}
     Under a linear approximation, the retriever’s score explicitly decomposes into semantic and artifact contributions in the embedding space.
\end{tcolorbox}
\end{center}

Formal proofs of Lemma~\ref{lem:density}, Proposition~\ref{prop:decom}, and Proposition~\ref{prop:sem-art} are provided in Appendix~\ref{app:proofs}. Together, these results specify the conditions under which supervision can induce source bias: when training data exhibit artifact imbalance, the optimal scorer encodes artifact-dependent signals alongside semantic content. The analysis further predicts that such artifacts correspond to linearly decodable directions in the embedding space, offering a concrete signature for empirical validation. This perspective clarifies when and how source bias may emerge and provides testable predictions that motivate the empirical analyses that follow.

\subsection{Proof of Lemma~\ref{lem:density}}
\label{app:proofs}

This appendix provides the formal proofs of the main theoretical results presented in Section~\ref{subsec:rq2_theory}. 
Specifically, we include detailed proofs of Lemma~\ref{lem:density}, Proposition~\ref{prop:decom}, and Proposition~\ref{prop:sem-art}. 

\begin{proof}

We derive the Bayes-optimal scorer for InfoNCE under independent negative sampling. The proof proceeds in three steps: (i) formalize the sampling and objective, (ii) show that risk minimization forces the predictor to match the true posterior, and (iii) compute this posterior and simplify.

\textbf{Step 1: Sampling scheme and objective.}
Draw a query $q \sim p(q)$ and sample an index $I \sim \mathrm{Unif}\{0,\dots,K\}$, where $K$ is the number of negative samples (not to be confused with the evaluation depth $k$). Here $I$ denotes the index of the positive passage. We use the same symbol for mutual information $I(\cdot;\cdot)$ later, but the two usages are contextually disambiguated. Conditioned on $(q, I)$, sample the positive passage $d_I \sim p_{\mathrm{pos}}(d \mid q)$ and sample negatives $d_j \sim p(d)$ for all $j \neq I$, yielding the candidate batch $\bm d=(d_0,\dots,d_K)$.

Given scores $s(q,d_j)\in\mathbb{R}$, the model predicts
\begin{equation}
\pi_\theta\!\left(i \mid q,\bm d\right) 
=\frac{\exp\big(s(q,d_i)\big)}{\sum_{j=0}^{K}\exp\big(s(q,d_j)\big)}.
\end{equation}
In practice, a temperature parameter $\tau$ is often included 
(i.e., $s(q,d)=\langle h_q(q),h_d(d)\rangle/\tau$). 
For clarity, we omit $\tau$, as it simply rescales the scores without affecting the derivation.

The InfoNCE loss is the expected negative log-likelihood (cross-entropy):
\begin{equation}
\mathcal{L}(\theta)
= \mathbb{E}_{(q,\bm d)}\Big[\,\mathbb{E}_{I\mid q,\bm d}\big[-\log \pi_\theta(I\mid q,\bm d)\big]\,\Big]
= \mathbb{E}_{(q,\bm d)}\big[ R(\bm s; q,\bm d)\big],
\end{equation}
where we denote \(P_i = \mathbb{P}(I=i\mid q,\bm d)\) and \(\pi_i=\pi_\theta(i\mid q,\bm d)\)
\begin{equation}
R(\bm s; q,\bm d)
= -\sum_{i=0}^{K} P_i \log \pi_i.
\end{equation}

\textbf{Step 2: Bayes optimality.} This risk decomposes as
\begin{align}
R(\bm s; q,\bm d) = -\sum_{i=0}^{K} P_i \log \pi_i
= \underbrace{\big(-\sum_{i} P_i \log P_i\big)}_{H(P)} + \sum_{i} P_i \log \frac{P_i}{\pi_i}
\;=\; H(P) + \mathrm{KL}(P\Vert \pi).
\end{align}
Since \(H(P)\) is independent of \(\theta\) and \(\mathrm{KL}(P\Vert \pi)\ge 0\) with equality iff \(\pi=P\), we have
\begin{equation}
\pi_\theta(\cdot\mid q,\bm d) \;\text{minimizes } R(\bm s; q,\bm d) \quad\Longleftrightarrow\quad \pi_\theta(\cdot\mid q,\bm d)=P(\cdot\mid q,\bm d).
\end{equation}
Because \(\pi_\theta(i\mid q,\bm d)=\frac{\exp(s(q,d_i))}{\sum_{j}\exp(s(q,d_j))}\) is a softmax over scores, any optimizer must satisfy
\begin{equation}
s(q,d_i) = \log P_i \;+\; C(q,\bm d),
\end{equation}
for some additive constant \(C(q,\bm d)\) that is shared across all \(i\) (hence irrelevant to the softmax).

\textbf{Step 3: Compute the posterior.} To compute $P_i$, note that by Bayes’ rule and the sampling scheme,
\begin{align}
P_i = \mathbb{P}(I=i \mid q,\bm d)
&\propto
\mathbb{P}(I=i)\, p(q)\, p(d_i \mid I=i,q)\, \prod_{j\neq i} p(d_j \mid I=i,q)
\\
&= \frac{1}{K+1}\, p(q)\, p_{\mathrm{pos}}(d_i\mid q)\, \prod_{j\neq i} p(d_j),
\end{align}
where we used $p(d_j \mid I=i,q)=p(d_j)$ for $j\neq i$ and $p(d_i\mid I=i,q)=p_{\mathrm{pos}}(d_i\mid q)$.
Normalizing over $i$ yields
\begin{equation}
\mathbb{P}(I=i \mid q,\bm d) =
\frac{\displaystyle \frac{p_{\mathrm{pos}}(d_i\mid q)}{p(d_i)}}
     {\displaystyle \sum_{j=0}^{K} \frac{p_{\mathrm{pos}}(d_j\mid q)}{p(d_j)}}.
\end{equation}
Taking logs and plugging into the optimality condition above, we obtain
\begin{align}
s^{*}(q,d_i) &=  \log P_i +C(q,\bm d) \\
&= \log\frac{p_{\mathrm{pos}}(d_i\mid q)}{p(d_i)} 
- \log\!\left(\sum_{j=0}^{K} \frac{p_{\mathrm{pos}}(d_j\mid q)}  {p(d_j)}\right) + C(q,\bm d) \\ 
&= \log\frac{p_{\mathrm{pos}}(q,d_i)}{p(q)\,p(d_i)} + \log p(q)-\log p_{\mathrm{pos}}(q) - \log\!\left(\sum_{j=0}^{K} \frac{p_{\mathrm{pos}}(d_j\mid q)}{p(d_j)}\right) +C(q,\bm d)
\end{align}
The last four terms are independent of $d$ (they depend only on $q$ or the batch $\bm d$). Since the softmax is invariant to adding any constant shared across candidates, they can be absorbed into a single additive constant. Hence the Bayes-optimal scorer is equivalently
\begin{equation}
s^{*}(q,d)=\log\frac{p_{\mathrm{pos}}(q,d)}{p(q)\,p(d)} + C,
\end{equation}
for some constant $C$ that does not depend on $d$. This completes the proof.

\paragraph{Remark.}
If negatives are drawn from a distribution $p_{\mathrm{neg}}(d)$ other than $p(d)$,
the same derivation yields $s^{*}(q,d)=\log \frac{p_{\mathrm{pos}}(d\mid q)}{p_{\mathrm{neg}}(d)} + C$.
In all cases, $s^{*}$ is unique up to adding any function of $q$.
\end{proof}

\subsection{Proof of Proposition~\ref{prop:decom}}
\begin{proof}
The goal is to show that the density ratio naturally decomposes into a semantic term and an artifact term; if the artifact distribution differs between positives and negatives, the artifact contribution cannot vanish. 

We use uppercase letters (e.g., $M_d, A_d$) to denote random vectors, and lowercase $m_d,a_d)$ for their realizations.  
The argument proceeds by a change of variables. If $T$ is further assumed to be $C^1$ and bijective onto its image, then
\begin{align}
p_{\mathrm{pos}}(q,m_d,a_d) &= p_{\mathrm{pos}}(q,d)\,|\det J_T(d)|^{-1}, \\
p(m_d,a_d) &= p(d)\,|\det J_T(d)|^{-1}.
\end{align}
Thus,
\begin{equation}
\frac{p_{\mathrm{pos}}(q,d)}{p(q)p(d)} = \frac{p_{\mathrm{pos}}(q,m_d,a_d)}{p(q)p(m_d,a_d)}.
\end{equation}

Applying the chain rule twice gives
\begin{align}
\log\frac{p_{\mathrm{pos}}(q,m_d,a_d)}{p(q)\,p(m_d,a_d)}
= \big[\log p_{\mathrm{pos}}(q\mid m_d,a_d) - \log p(q)\big]
+ \big[\log p_{\mathrm{pos}}(m_d,a_d) - \log p(m_d,a_d)\big].
\end{align}
Decompose further as $\log p_{\mathrm{pos}}(m_d,a_d)=\log p_{\mathrm{pos}}(m_d)+\log p_{\mathrm{pos}}(a_d\mid m_d)$ and
$\log p(m_d,a_d)=\log p(m_d)+\log p(a_d\mid m_d)$, and add–subtract $\log p_{\mathrm{pos}}(q\mid m_d)$ to isolate the $(q,m_d)$ contribution:
\begin{align}
\log\frac{p_{\mathrm{pos}}(q,m_d,a_d)}{p(q)\,p(m_d,a_d)} 
&=\underbrace{\big[\log p_{\mathrm{pos}}(q\mid m_d)-\log p(q)\big] + \big[\log p_{\mathrm{pos}}(m_d)-\log p(m_d)\big]}_{\phi(q,m_d)} \nonumber \\
&\quad+ \underbrace{\big[\log p_{\mathrm{pos}}(q\mid m_d,a_d)-\log p_{\mathrm{pos}}(q\mid m_d)\big] + \big[\log p_{\mathrm{pos}}(a\mid m_d)-\log p(a\mid m_d)\big]}_{\psi(a_d\mid q,m_d)}.
\end{align}

If $p_{\mathrm{pos}}(a_d\mid q,m_d)\neq p(a_d\mid m_d)$ on a set of positive measure, then the artifact term $\psi$ cannot vanish.  

Since $s^*(q,d)=\phi(q,m_d)+\psi(a_d\mid q,m_d)+C$ is a deterministic function of $(q,m_d,a_d)$, we have
\begin{equation}
H(s^*\mid q,m_d,a_d)=0.
\end{equation}
Here $H(\cdot\mid\cdot)$ denotes conditional Shannon entropy. We will make use of the identity
\[
I(X;Z\mid Y)=H(Z\mid Y)-H(Z\mid X,Y)
\]
for conditional mutual information. 

If $A\mid(q,m_d)$ is non-degenerate and $\psi(\cdot\mid q,m_d)$ is non-constant, then the induced distribution of $s^*$ given $(q,m_d)$ is non-degenerate, i.e.,
\begin{equation}
H(s^*\mid q,m_d)>0.
\end{equation}
Applying the above identity yields
\begin{equation}
I(A;s^*\mid q,m_d)
=H(s^*\mid q,m_d)-H(s^*\mid q,m_d,a_d)>0,
\end{equation}
which establishes the claim.
\end{proof}

\subsection{Proof of Proposition~\ref{prop:sem-art}}

\begin{proof}
Let $T:\mathcal D\to\mathcal M\times\mathcal A$ be a $C^1$ bijection onto its image with $T(d)=(M_d,A_d)$, and let the passage encoder $h_d:\mathcal D\to\mathbb{R}^m$ be $C^1$. Define $g(m,a)\coloneqq h_d(T^{-1}(m,a))$ and fix a reference $a_0\in\mathcal A$. Then for $(m,a)$ near $(m,a_0)$,
\[
g(m,a)\;=\; g(m,a_0)\;+\;J_a(m,a_0)\,(a-a_0)\;+\;r(m,a),
\quad \|r(m,a)\|=o(\|a-a_0\|),
\]
where $J_a(m,a_0)=\big[\partial g(m,a)/\partial a\big]_{a=a_0}$. Writing
\[
h_{\mathrm{sem}}(m)\coloneqq g(m,a_0),\qquad 
h_{\mathrm{art}}(a;\,m)\coloneqq J_a(m,a_0)\,(a-a_0),
\]
we obtain the local additive form
\[
h_d(d)\;=\;h_{\mathrm{sem}}(M_d)\;+\;h_{\mathrm{art}}(A_d;\,M_d)\;+\;r(M_d,A_d).
\]

At this point, we make a simplifying assumption: the Jacobian $J_a(m,a_0)$ does not
substantially depend on $m$, or any residual dependence can be absorbed into the
remainder term. Under this idealization we may write $h_{\mathrm{art}}(a;\,m)\approx
h_{\mathrm{art}}(a)$.

Consequently, for a dot-product retriever $s_\theta(q,d)=\langle h_q(q),h_d(d)\rangle$,
\begin{equation}\label{eq:sem-art-conditional-score}
s_\theta(q,d)\;=\;\underbrace{\langle h_q(q),h_{\mathrm{sem}}(M_d)\rangle}_{\text{semantic}}
\;+\;\underbrace{\langle h_q(q),h_{\mathrm{art}}(A_d)\rangle}_{\text{artifact (linear)}}\;+\;
\varepsilon(q,M_d,A_d),
\end{equation}
where $\varepsilon(q,M_d,A_d)\coloneqq \langle h_q(q),r(M_d,A_d)\rangle$ satisfies $\varepsilon(q,M_d,A_d)=o(\|A_d-a_0\|)$ as $\|A_d-a_0\|\to 0$. In other words, the remainder vanishes to first order and can be neglected in the idealized decomposition.
\end{proof}

\paragraph{Remark.}
The argument relies on a local first-order approximation and a simplifying assumption on the artifact Jacobian. These approximations are introduced only to obtain a clearer analytical decomposition of semantic and artifact contributions. In the main text, we empirically examine whether artifact features can be linearly decodable from $h_d(d)$, providing evidence in support of this idealized view.
\section{Additional Linguistic Analyses}
\label{app:linguistic_appendix}

In this appendix, we provide supplementary analyses promised in Section~\ref{subsec:rq2_linguistic}. 
Specifically, we report (i) additional effect-size analyses for the comparisons in the main text, and (ii) results on the other 13 datasets beyond MS~MARCO. 

\subsection{Effect-Size Analyses}
\label{app:linguistic_effectsize}
We quantify the magnitude of linguistic differences using standard effect-size measures (Hedges’ $g$ for mean differences) and report associated significance levels. 
These statistics complement the significance tests in the main paper by showing not only whether differences are significant but also their practical magnitude. 
Table~\ref{tab:effect_size} summarizes results on MS~MARCO for two contrasts: (i) positives vs.\ the unannotated pool, and (ii) LLM-generated vs.\ human-written passages. 

\begin{table}[h]
\centering
\caption{Effect sizes (Hedges’ $g$) and significance for linguistic feature comparisons on MS~MARCO.
Positive values indicate higher scores for the first group. $p$-values smaller than numerical precision are reported as $p<10^{-15}$.}
\label{tab:effect_size}
\begin{tabular}{lccc}
\toprule
Comparison & PPL ($g$) & IDF ($g$) & $p$-value \\
\midrule
Positives vs.\ Unannotated & $-0.214$ & $+0.047$ & $<10^{-15}$ \\
LLM vs.\ Human             & $-0.274$ & $+0.145$ & $<10^{-15}$ \\
\bottomrule
\end{tabular}
\end{table}

We observe that both comparisons yield highly significant differences despite modest effect sizes. 
For perplexity (PPL), positives are more fluent than the unannotated pool ($g=-0.214$), and LLM passages are even more fluent than human passages ($g=-0.274$). 
For IDF, the effects are smaller ($g=0.047$ and $0.145$ respectively) but consistently positive, indicating that both positives and LLM rewrites exhibit slightly greater lexical specificity.
Taken together, these results show that supervision and source type both introduce systematic, statistically robust shifts in linguistic features, even if the magnitudes are moderate. 

\subsection{Positives vs. Negatives on Additional Datasets}
\label{app:linguistic_posneg}

To assess whether the imbalance between annotated positives and negatives generalizes beyond MS~MARCO, we extend the perplexity analysis to other datasets in Cocktail (Figure\ref{fig:pos_neg_ppl_distribution}).
For datasets that share the same corpus (e.g., MS~MARCO and DL19/20), we report results only once. For NFCorpus and HotpotQA, all passages are annotated with relevance labels, so no negative pool exists and only positives are shown. Across the remaining datasets, positives consistently exhibit lower perplexity than negatives, mirroring the trend in MS~MARCO. This indicates that stylistic disparities between positives and negatives are not dataset-specific idiosyncrasies but a systematic property of retrieval supervision. As discussed in the main text, positives are often drawn from edited, high-quality sources intended to serve as good answers, whereas negatives derive from more heterogeneous and less polished text.

\begin{figure}[t]
  \centering
  \includegraphics[width=0.95\linewidth]{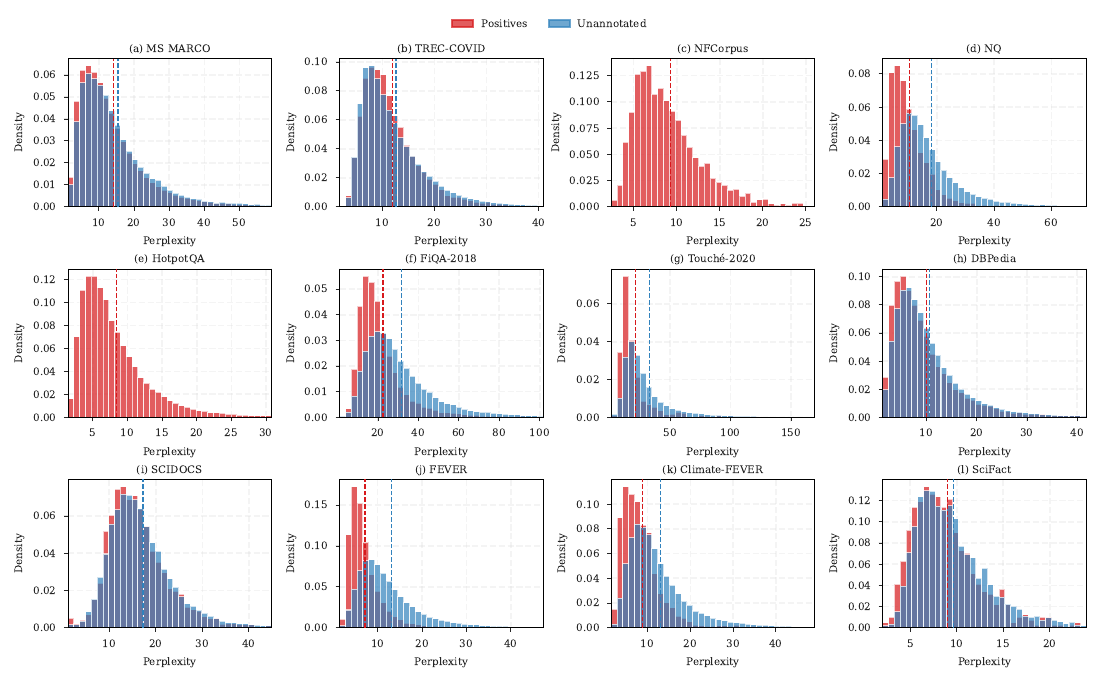}
  \caption{
    Perplexity distributions of positives versus negatives across retrieval datasets in Cocktail. For MS~MARCO and DL19/20, results are reported once due to corpus overlap. For NFCorpus and HotpotQA, all passages are annotated as relevant, so only positive distributions are shown. 
    }
  \label{fig:pos_neg_ppl_distribution}
\end{figure}

\subsection{LLM vs. Human across Additional Datasets}
\label{app:linguistic_otherdatasets}
To ensure that the findings generalize beyond MS~MARCO, we replicate the analysis on the other datasets in Cocktail. 
Figure~\ref{fig:ppl_otherdatasets} reports perplexity distributions, and Figure~\ref{fig:idf_otherdatasets} reports IDF distributions, comparing LLM-generated versus human-written passages. 

Consistent with the MS~MARCO case, LLM-generated passages consistently exhibit lower perplexity and slightly higher IDF than their human-written counterparts. The PPL differences are stable and clear across all datasets, while the IDF differences are more modest in magnitude but follow the same direction throughout. These results confirm that source-based stylistic artifacts are systematic and broadly consistent across domains.

\begin{figure*}[t]
    \centering
    \includegraphics[width=\linewidth]{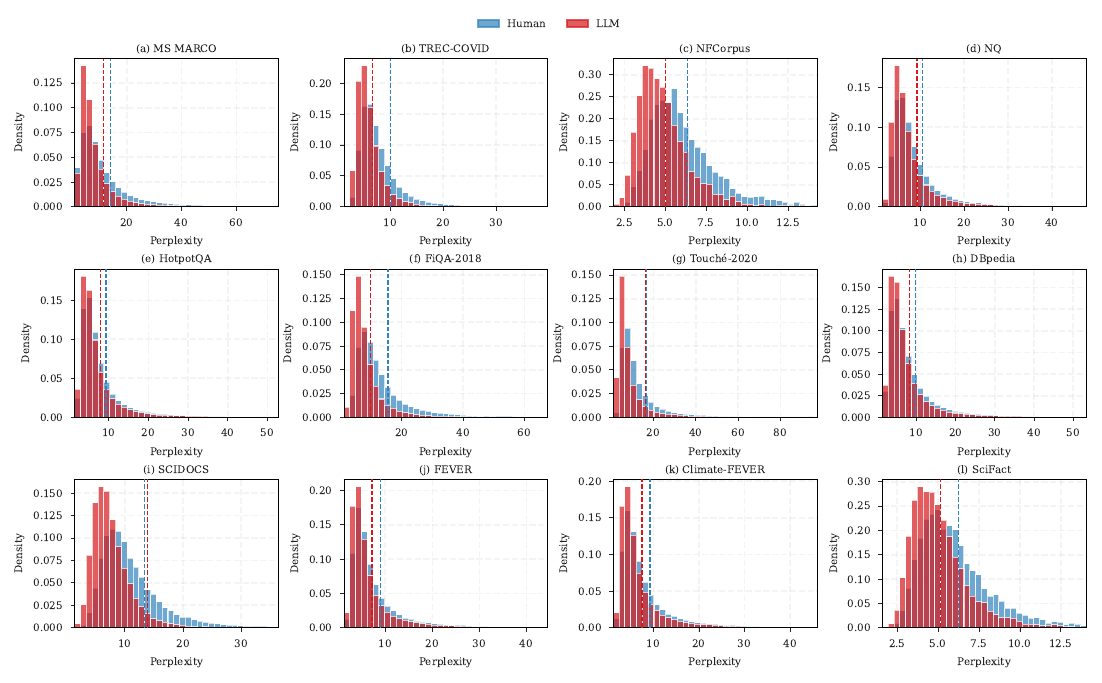}
    \caption{Perplexity (PPL) distributions of LLM-generated vs. human-written passages across additional datasets. 
    Red = LLM, Blue = Human. LLM passages consistently exhibit lower perplexity, indicating higher fluency.}
    \label{fig:ppl_otherdatasets}
\end{figure*}

\begin{figure*}[t]
    \centering
    \includegraphics[width=\linewidth]{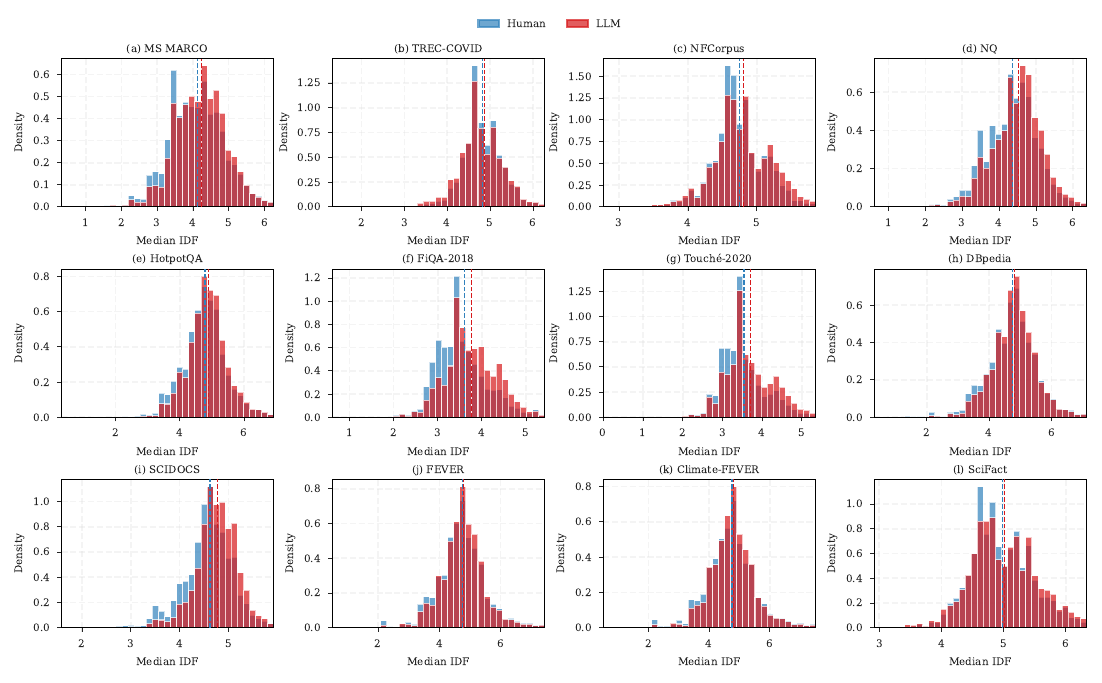}
    \caption{Median IDF distributions of LLM-generated vs. human-written passages across additional datasets. 
    Red = LLM, Blue = Human. LLM passages generally exhibit higher IDF, though the gap varies across datasets.}
    \label{fig:idf_otherdatasets}
\end{figure*}
\section{Cosine Similarity Between Random High-Dimensional Vectors}
\label{app:cosine_null}

We derive the null distribution of cosine similarities between independent random vectors, which serves as the statistical baseline for our embedding-space analyses. Let $x,y \in \mathbb{R}^m$ be isotropic random vectors. Normalizing to the unit sphere ($\hat x = x/\|x\|$, $\hat y = y/\|y\|$) yields $\hat x,\hat y \sim \mathrm{Unif}(\mathbb{S}^{m-1})$, and their cosine similarity is
\[
Z = \langle \hat x, \hat y \rangle \in [-1,1].
\]
By rotational invariance, $Z$ follows a Beta-type density~\citep{vershynin2018high}:
\[
f_{Z}(z) \;=\;
\frac{\Gamma(\tfrac m2)}{\sqrt{\pi}\,\Gamma(\tfrac{m-1}{2})}
(1-z^2)^{\frac{m-3}{2}}, \quad z \in [-1,1],
\]
which is symmetric around zero. Equivalently, the tail probability can be expressed via the regularized incomplete Beta function:
\[
\Pr(|Z| > t) \;=\; \mathrm{I}_{\,1-t^2}\!\Big(\tfrac{m-1}{2},\,\tfrac12\Big).
\]

By symmetry, $\mathbb{E}[Z]=0$. Since each coordinate of a uniform unit vector has variance $1/m$, the variance of $Z$ is
\[
\mathrm{Var}(Z) \;=\; \frac{1}{m}.
\]
For large $m$, the density concentrates sharply at zero. Expanding $\log(1-z^2)\approx -z^2$ near the origin gives the Gaussian approximation
\[
Z \;\approx\; \mathcal{N}\!\left(0,\,\tfrac{1}{m}\right).
\]

In dimension $m=768$, the standard deviation is $\sigma = 1/\sqrt{m} \approx 0.0361$, so that $3\sigma \approx 0.108$. Under the normal approximation,
\[
\Pr(|Z| > 3\sigma) \approx 0.27\%,
\]
which closely matches the exact Beta distribution. Thus, over $99.7\%$ of random pairs fall within $\pm 3\sigma$, validating the use of this threshold as a significance criterion in high-dimensional embedding spaces. Figure~\ref{fig:cosine_null_distribution} illustrates this concentration.

\begin{figure}[h]
    \centering
    \includegraphics[width=0.55\linewidth]{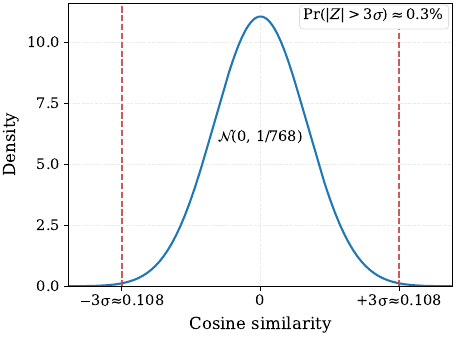}
    \caption{Null distribution of cosine similarity between random vectors in $m=768$ dimensions, approximated by $\mathcal{N}(0,1/m)$. Over $99.7\%$ of values lie within $\pm 3\sigma \approx 0.108$, supporting its use as a significance criterion.}
    \label{fig:cosine_null_distribution}
\end{figure}
\section{Additional Embedding Analyses}
\label{app:embedding_full}

In this appendix, we provide the full embedding-space analyses across all 12 distinct corpora in the Cocktail benchmark, using the DRAGON retriever as a representative model. 
Our experiments use 14 datasets from the Cocktail benchmark. 
Since three of them (MS~MARCO, DL19, and DL20) share the same underlying corpus, we report embedding statistics at the corpus level, resulting in 12 unique corpora.  These figures complement the representative results shown in the main text and report:  
(1) within-dataset displacement consistency (Figure~\ref{fig:app_within}),  
(2) cross-dataset similarity of mean displacement directions (Figure~\ref{fig:app_cross}), and  
(3) alignment between LLM–human and supervision-induced directions (Figure~\ref{fig:app_alignment}).

\begin{figure}[t]
  \centering
  \includegraphics[width=0.9\linewidth]{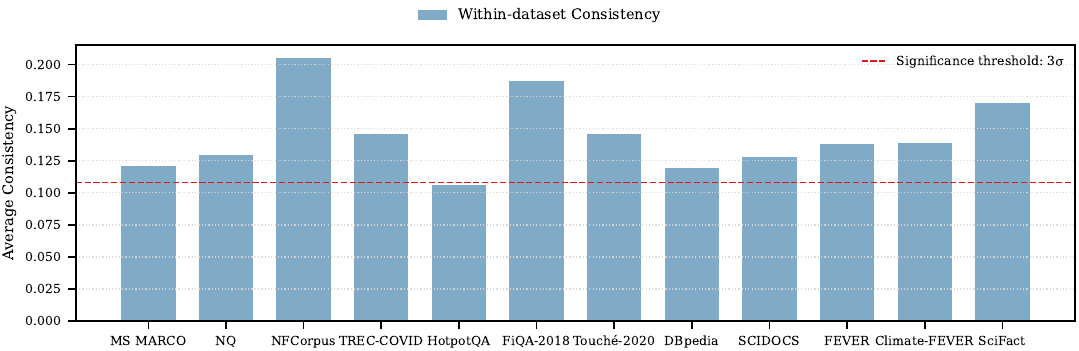}
  \caption{
  Within-dataset consistency of LLM–Human displacements. Bars show average pairwise cosine similarity among displacement vectors $\delta_i^{\text{LH}}$ within each dataset, relative to the $3\sigma$ significance threshold. Most datasets exceed the threshold, with a few exceptions near or below it.
  }
  \label{fig:app_within}
\end{figure}

\begin{figure}[t]
  \centering
  \includegraphics[width=0.9\linewidth]{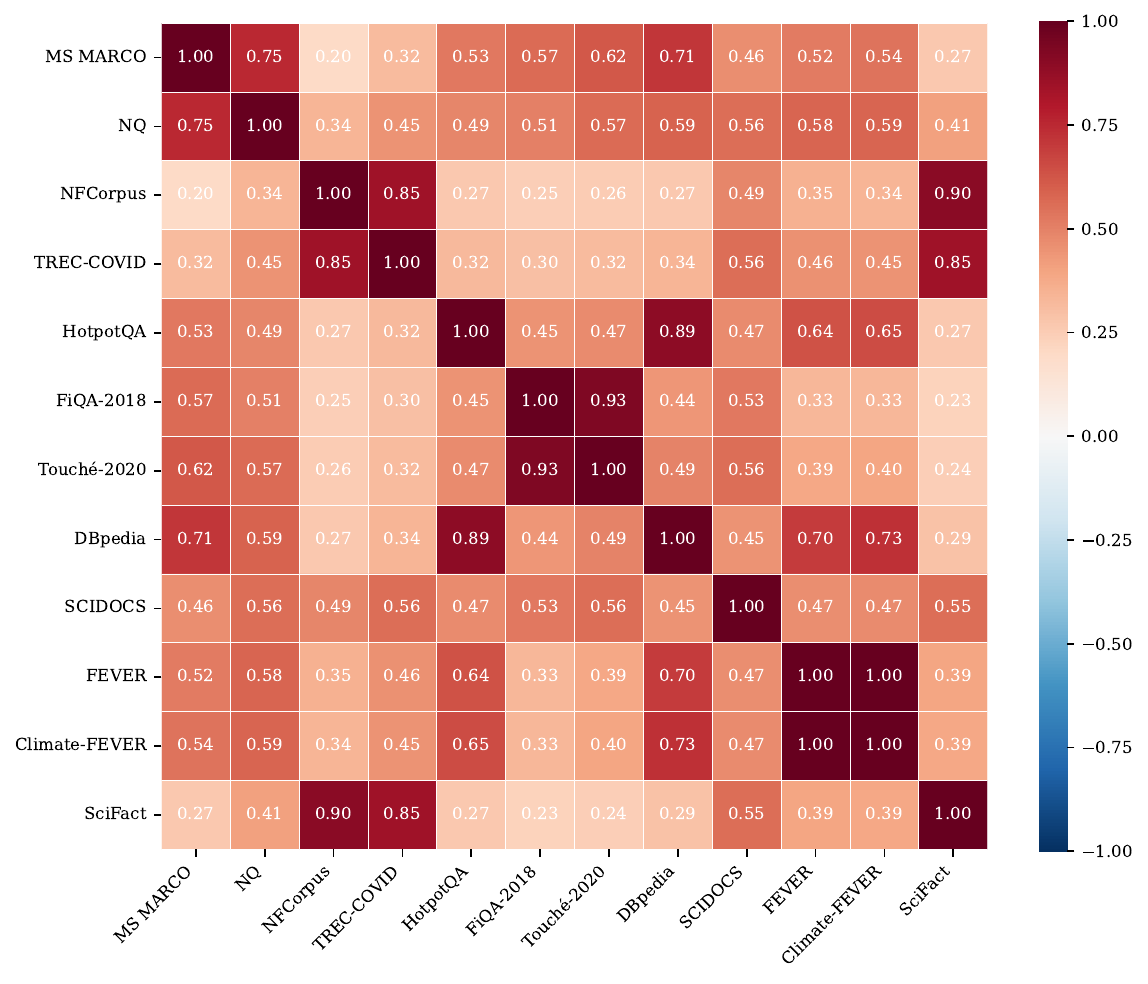}
  \caption{
  Cross-dataset similarity of mean LLM–Human displacement directions. Values denote cosine similarity between dataset-level means $\overline{\delta}_\text{LH,D}$. Darker cells indicate stronger alignment, revealing consistent artifact-induced directions across corpora.
  }
  \label{fig:app_cross}
\end{figure}

\begin{figure}[t]
  \centering
  \includegraphics[width=0.9\linewidth]{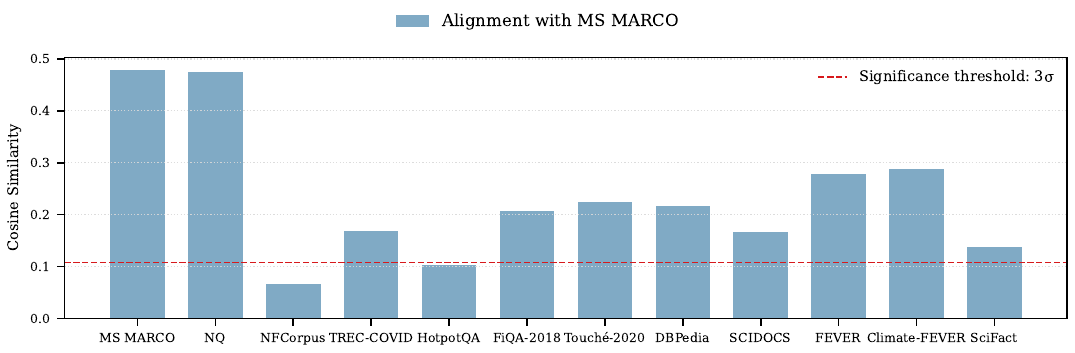}
  \caption{
  Cosine similarity between the LLM–Human displacement direction and the MS~MARCO positive–negative contrast, across datasets. 
  The red dashed line marks the $3\sigma$ significance threshold derived under the random null. Most datasets show strong alignment beyond the threshold, with a few cases near or below it.
  }
  \label{fig:app_alignment}
\end{figure}

Overall, these results extend the main-text findings to the full set of datasets. The majority of datasets follow the same trends as reported in the main text, while a small number exhibit weaker effects, which we discuss as exceptions rather than contradictions.
\section{Additional Results for RQ3}
\label{app:rq3_results}

In this section, we provide the supplementary results for Section~\ref{sec:rq3}, including (a) retrieval effectiveness for the training-time sampling experiments, which were omitted from the main text due to space constraints, and (b) additional inference-time debiasing results on more datasets. These results complement the main findings and further validate our conclusions. 

\subsection{Training-time Interventions: Retrieval Effectiveness}
Table~\ref{tab:rq3_train_performance_app} reports retrieval effectiveness (NDCG@5) for the three negative sampling strategies (\emph{in-batch only}, \emph{standard}, and \emph{hard-neg only}) across all datasets. Overall, settings that include mined hard negatives achieve higher retrieval performance, while using only in-batch negatives leads to lower effectiveness on most datasets. This trend is consistent with widely noted observations in the dense retrieval community that mined hard negatives are essential for strong retrieval effectiveness.

\begin{table}[ht]
    \centering
    \caption{NDCG@5 results on 14 datasets under different negative sampling strategies. The "Standard" strategy combines in-batch and hard negatives, while the other two use only one type.}
    \label{tab:rq3_train_performance_app}

    \resizebox{0.6\linewidth}{!}{
    \begin{tabular}{l ccc}
    \toprule
    \textbf{Dataset} & In-batch only & Standard & Hard-neg only \\
    \midrule
    MS MARCO      & 0.629 & 0.629 & 0.623 \\
    DL19          & 0.640 & 0.706 & 0.728 \\
    DL20          & 0.642 & 0.701 & 0.719 \\
    TREC-COVID    & 0.571 & 0.611 & 0.568 \\
    NFCorpus      & 0.303 & 0.287 & 0.278 \\
    NQ            & 0.652 & 0.670 & 0.666 \\
    HotpotQA      & 0.570 & 0.579 & 0.579 \\
    FiQA-2018     & 0.209 & 0.218 & 0.216 \\
    Touché-2020   & 0.350 & 0.418 & 0.411 \\ 
    DBpedia       & 0.428 & 0.436 & 0.437 \\
    SCIDOCS       & 0.096 & 0.086 & 0.086 \\
    FEVER         & 0.850 & 0.842 & 0.829 \\
    Climate-FEVER & 0.280 & 0.271 & 0.241 \\
    SciFact       & 0.435 & 0.452 & 0.443 \\
    \midrule
    \textbf{Average}  & 0.475 & 0.493 & 0.487 \\
    \bottomrule
    \end{tabular}}
\end{table}

\subsection{Inference-time Interventions: Additional Datasets}
We extend the inference-time evaluation beyond the five datasets shown in the main text. Table~\ref{tab:rq3_debias_app} reports $\Delta$NDSR@5 across all datasets, while Table~\ref{tab:rq3_ndcg_app} shows the corresponding NDCG@5 results. Overall, the projection method generally reduces source bias, while retrieval effectiveness is largely preserved across datasets, consistent with the main text findings.

\begin{table*}[t]
\centering
\caption{$\Delta$NDSR@5 results (original vs.\ debiased) across 14 datasets and 5 relevance-supervised retrievers. 
Positive values indicate a preference for human-written passages, whereas negative values indicate a preference for LLM-generated ones. In the Average row, the first line reports the mean $\Delta$NDSR@5, and the second line shows the remaining proportion of $|\Delta\text{NDSR}@5|$ after debiasing (original = 100\%). Shading in the Average row reflects the relative magnitude of $|\Delta\text{NDSR}@5|$, with \colorbox{llmred}{darker} colors indicating stronger source bias.}
\label{tab:rq3_debias_app}

\begin{minipage}{0.92\textwidth}
\renewcommand{\arraystretch}{0.9}
\setlength{\tabcolsep}{3.5pt}
\resizebox{\linewidth}{!}{
\begin{tabular}{l | cc | cc | cc | cc | cc}
\toprule
\multirow{2}{*}{Dataset (↓)} & \multicolumn{2}{c}{ANCE} & \multicolumn{2}{c}{coCondenser} & \multicolumn{2}{c}{DRAGON} & \multicolumn{2}{c}{RetroMAE} & \multicolumn{2}{c}{TAS-B} \\
\cmidrule(lr){2-3} \cmidrule(lr){4-5} \cmidrule(lr){6-7} \cmidrule(lr){8-9} \cmidrule(lr){10-11}
 & Original & Debias & Original & Debias & Original & Debias & Original & Debias & Original & Debias \\
\midrule
MS~MARCO       & -0.042 &  0.168 & -0.020 &  0.094 & -0.083 & -0.065 & -0.083 &  0.011 & -0.121 & -0.062 \\
DL19           & -0.073 &  0.197 & -0.072 &  0.096 & -0.233 & -0.160 & -0.186 &  0.076 & -0.224 & -0.151 \\
DL20           & -0.034 &  0.270 & -0.079 &  0.011 & -0.121 & -0.103 & -0.088 &  0.015 & -0.072 &  0.007 \\
TREC-COVID     & -0.162 & -0.178 & -0.340 & -0.281 & -0.134 & -0.154 & -0.194 & -0.098 & -0.328 & -0.248 \\
NFCorpus       & -0.087 & -0.067 & -0.068 & -0.064 & -0.079 & -0.064 & -0.081 & -0.044 & -0.082 & -0.057 \\
NQ             & -0.042 & -0.032 & -0.072 & -0.071 & -0.099 & -0.085 & -0.060 & -0.044 & -0.078 & -0.062 \\
HotpotQA       & -0.020 &  0.014 & -0.014 &  0.029 & -0.018 & -0.031 & -0.019 &  0.045 & -0.018 & -0.024 \\
FiQA-2018      & -0.179 & -0.159 & -0.219 & -0.263 & -0.161 & -0.154 & -0.205 & -0.201 & -0.170 & -0.182 \\
Touché-2020    & -0.168 & -0.148 & -0.226 & -0.153 & -0.178 & -0.162 & -0.175 & -0.127 & -0.247 & -0.197 \\
DBpedia        & -0.097 &  0.025 & -0.054 & -0.015 & -0.057 & -0.055 & -0.059 &  0.006 & -0.042 & -0.036 \\
SCIDOCS        & -0.040 &  0.069 & -0.058 & -0.053 & -0.048 & -0.012 & -0.073 &  0.007 & -0.054 &  0.010 \\
FEVER          & -0.200 & -0.061 & -0.037 & -0.041 & -0.043 & -0.031 & -0.010 &  0.031 & -0.029 & -0.029 \\
Climate-FEVER  & -0.314 & -0.225 & -0.153 & -0.066 & -0.091 & -0.066 & -0.105 &  0.023 & -0.083 & -0.064 \\
SciFact        & -0.025 & -0.020 & -0.049 & -0.033 & -0.041 & -0.042 & -0.048 & -0.043 & -0.058 & -0.063 \\
\midrule
\raisebox{1.5ex}{\textbf{Average}} &
  \shortstack[c]{\avgshadeinfer{-0.106}{-0.106}\\ (100\%)} &
  \shortstack[c]{\avgshadeinfer{-0.011}{-0.106}\\ (10\%)}  &
  \shortstack[c]{\avgshadeinfer{-0.104}{-0.104}\\ (100\%)} &
  \shortstack[c]{\avgshadeinfer{-0.036}{-0.104}\\ (35\%)}  &
  \shortstack[c]{\avgshadeinfer{-0.099}{-0.099}\\ (100\%)} &
  \shortstack[c]{\avgshadeinfer{-0.084}{-0.099}\\ (85\%)}  &
  \shortstack[c]{\avgshadeinfer{-0.099}{-0.099}\\ (100\%)} &
  \shortstack[c]{\avgshadeinfer{-0.044}{-0.099}\\ (44\%)}  &
  \shortstack[c]{\avgshadeinfer{-0.115}{-0.115}\\ (100\%)} &
  \shortstack[c]{\avgshadeinfer{-0.083}{-0.115}\\ (72\%)} \\
\bottomrule
\end{tabular}
}
\end{minipage}
\hfill
\begin{minipage}{0.06\textwidth}
\centering
\debiascolorbarverticalappendix[2] 
\end{minipage}
\end{table*}

\begin{table*}[t]
\centering
\caption{NDCG@5 results (original vs. debias) on 14 datasets for 5 relevance-supervised retrievers.}
\label{tab:rq3_ndcg_app}
\renewcommand{\arraystretch}{0.9} 
\setlength{\tabcolsep}{3pt}
\resizebox{1.0\textwidth}{!}{
\begin{tabular}{l | cc | cc | cc | cc | cc}
\toprule
\multirow{2}{*}{Dataset (↓)} & \multicolumn{2}{c}{ANCE} & \multicolumn{2}{c}{coCondenser} & \multicolumn{2}{c}{DRAGON} & \multicolumn{2}{c}{RetroMAE} & \multicolumn{2}{c}{TAS-B} \\
\cmidrule(lr){2-3} \cmidrule(lr){4-5} \cmidrule(lr){6-7} \cmidrule(lr){8-9} \cmidrule(lr){10-11}
 & Original & Debias & Original & Debias & Original & Debias & Original & Debias & Original & Debias \\
\midrule
MS~MARCO       & 0.590 & 0.568 & 0.620 & 0.621 & 0.665 & 0.665 & 0.626 & 0.626 & 0.617 & 0.617 \\
DL19           & 0.695 & 0.706 & 0.750 & 0.747 & 0.767 & 0.769 & 0.739 & 0.743 & 0.743 & 0.743 \\
DL20           & 0.716 & 0.671 & 0.750 & 0.751 & 0.778 & 0.779 & 0.760 & 0.771 & 0.737 & 0.740 \\
TREC-COVID     & 0.679 & 0.690 & 0.707 & 0.695 & 0.684 & 0.681 & 0.744 & 0.737 & 0.644 & 0.638 \\
NFCorpus       & 0.301 & 0.304 & 0.382 & 0.381 & 0.397 & 0.396 & 0.373 & 0.376 & 0.375 & 0.381 \\
NQ             & 0.628 & 0.626 & 0.687 & 0.687 & 0.737 & 0.737 & 0.704 & 0.704 & 0.689 & 0.689 \\
HotpotQA       & 0.537 & 0.537 & 0.640 & 0.639 & 0.719 & 0.719 & 0.716 & 0.715 & 0.674 & 0.673 \\
FiQA-2018      & 0.255 & 0.255 & 0.244 & 0.244 & 0.323 & 0.322 & 0.278 & 0.277 & 0.257 & 0.261 \\
Touché-2020    & 0.487 & 0.475 & 0.326 & 0.333 & 0.501 & 0.513 & 0.444 & 0.450 & 0.429 & 0.415 \\
DBpedia        & 0.435 & 0.436 & 0.525 & 0.522 & 0.540 & 0.540 & 0.526 & 0.524 & 0.518 & 0.518 \\
SCIDOCS        & 0.114 & 0.113 & 0.124 & 0.125 & 0.148 & 0.146 & 0.136 & 0.136 & 0.138 & 0.133 \\
FEVER          & 0.824 & 0.829 & 0.785 & 0.786 & 0.895 & 0.894 & 0.891 & 0.892 & 0.858 & 0.858 \\
Climate-FEVER  & 0.240 & 0.245 & 0.237 & 0.240 & 0.290 & 0.291 & 0.279 & 0.283 & 0.286 & 0.287 \\
SciFact        & 0.429 & 0.427 & 0.530 & 0.526 & 0.599 & 0.595 & 0.571 & 0.574 & 0.564 & 0.563 \\
\midrule
\textbf{Average}            & 0.495 & 0.492 & 0.522 & 0.521 & 0.575 & 0.575 & 0.556 & 0.558 & 0.537 & 0.537 \\
\bottomrule
\end{tabular}
}
\end{table*}

\end{document}